\documentclass[journal]{IEEEtran}

\usepackage[left=0.625in,right=0.625in,top=0.75in,bottom=1in]{geometry}
\usepackage{amsmath}
\usepackage{amssymb}
\usepackage{graphicx}
\usepackage{array}
\usepackage{algorithm}
\usepackage{algpseudocode}
\usepackage{xcolor}
\usepackage{cite}
\usepackage{url}
\usepackage{hyperref}
\usepackage[utf8]{inputenc}
\usepackage{amsthm}
\usepackage{stfloats}
\usepackage{comment}
\usepackage{eso-pic}
\hypersetup{}

\newtheorem{proposition}{Proposition}[section]
\newtheorem{corollary}{Corollary}[section]

\begin{document}

\title{\fontsize{18pt}{18pt}\selectfont Defense-in-Depth at the Perception-Reasoning Interface of LLM-Centric Agentic UAV Swarms}

\author{ Mohammadhossein~Homaei,~\IEEEmembership{ Member,~IEEE,}
Yousef~Emami,~\IEEEmembership{Senior Member,~IEEE,}
Sajad~Homayoun,~\IEEEmembership{Senior Member,~IEEE,}
Rahim~Taheri,~\IEEEmembership{Senior Member,~IEEE,}
Hao~Zhou,~\IEEEmembership{Senior Member,~IEEE,}
Miguel~Gutiérrez~Gaitán,~\IEEEmembership{Senior Member,~IEEE,}
Bo~Wei,~\IEEEmembership{Senior Member,~IEEE,}

% \thanks{Mohammadhossein Homaei is with the Departamento de Ingeniería de Sistemas Informáticos y Telemáticos, Universidad de Extremadura, Av/ Universidad S/N, Cáceres, 10003, Spain. (e-mail: homaei@ieee.org and homaei@unex.es)}

% \thanks{Yousef Emami is with Real-Time and Embedded Computing Systems Research Centre (CISTER), 4200-135 Porto, Portugal (email:emami@isep.ipp.pt)}

% \thanks{Sajad Homayoun is with the Department of Electronic Systems, Aalborg University, Denmark (email:sajadh@es.aau.dk)}

% \thanks{Rahim Taheri is with the School of Computing, Maths and Physics, University of Portsmouth, UK (e-mail: rahim.taheri@port.ac.uk)}

% \thanks{Hao Zhou is with the School of Computer Science, McGill University, Montreal, QC H3A 0E9, Canada. (email:hzhou098@uottawa.ca)}

% \thanks{Miguel Gutiérrez Gaitán is with Pontificia Universidad Católica de Chile (email:miguel.gutierrez@uc.cl)}

% \thanks{Bo Wei is with the School of Computing, Newcastle University, UK (email:bo.wei@newcastle.ac.uk)
% }
}
\AddToShipoutPictureBG*{%
  \AtPageLowerLeft{%
    \put(0,\LenToUnit{0.4in}){%
      \makebox[\paperwidth]{%
        \parbox{0.85\paperwidth}{\centering\footnotesize
          This work has been submitted to the IEEE for possible publication.
          Copyright may be transferred without notice, after which this
          version may no longer be accessible.}}}}}
\maketitle

\begin{abstract}
Large Language Models (LLMs) increasingly support Uncrewed Aerial Vehicle (UAV) swarm operations such as data collection scheduling, where the model reads structured sensor reports and decides which sensors to visit. An adversary who quietly manipulates those reports can redirect the swarm without modifying the model weights or the UAV. Defenses for this interface have been proposed architecturally but rarely implemented or evaluated. We implement and evaluate defense-in-depth at the perception-reasoning interface of LLM-Centric Agentic UAV Swarms. Five layers check the provenance of a report, whether its values are physically admissible, whether they agree with what swarm geometry and service history predict, whether the resulting schedule starves any sensor, and, when these fail, hand control to a deterministic scheduler that ignores the suspect input. We test each layer against an adversary strong enough to defeat the layer before it. For each of the three input-side layers, we derive in closed form how far a report can be distorted before that layer reacts, fixing each boundary from deployment parameters before any attack data is collected; across thirty matched simulation runs, predicted and measured boundaries agree. 
Separating attack detection from response is a well-established principle, and we quantify the cost of neglecting this distinction at the perception–reasoning interface. When the system rejects a report, it replaces it with the most recent accepted report. This prevents the adversary from controlling the UAV schedule, but it also increases cumulative cost by 79\% and 74\% for the two detectors, respectively, compared with the undefended system. The safety check does not detect any attacks, but it nevertheless reduces the attack-induced cost by 37.5\%.
\end{abstract}

\begin{IEEEkeywords}
Uncrewed aerial vehicles, agentic artificial intelligence, large language model agents, defense-in-depth, perception-reasoning interface, adversarial robustness, UAV swarm security.
\end{IEEEkeywords}
\IEEEpeerreviewmaketitle

\section{Introduction}

Uncrewed Aerial Vehicle (UAV) swarms coordinate sensing, communication, and decision-making for surveillance, search and rescue, environmental monitoring, disaster response, and infrastructure inspection. Integrating Artificial Intelligence (AI) into this coordination couples perception directly to decision-making, so an attacker who manipulates the information available to the controller can influence subsequent actions, and dependable operation under adversarial conditions becomes a deployment requirement \cite{alqudsi2025uav}. Agentic AI lets such systems pursue complex goals with limited supervision \cite{Sapkota2025}, but this autonomy also broadens the attack surface \cite{Dehghantanha2026} to prompts, retrieved knowledge, agent memory, tool interactions, and perception inputs, so security must cover the agentic pipeline rather than the Large Language Model (LLM) in isolation.

The concern is acute when LLMs schedule data collection, routing, or task allocation inside the control loop. The model does not observe the network directly: it reads a structured report of each sensor's backlog, residual energy, and link quality without verifying these values, which makes the perception–reasoning interface a critical security boundary. The Priority Manipulation Attack (PMA) reported for the LLM-Centric Agentic AI for UAV Swarm (LAUS) architecture \cite{emami2026laus} shows that an adversary can manipulate otherwise plausible observation data, without modifying model weights or the physical system, to redirect swarm behavior and substantially increase packet loss. Because such observations remain syntactically valid and physically plausible, schema-based validation may miss the attack, and similar interface-level weaknesses in LLM-assisted robotics, tool protocols, multi-agent collaboration, and agent memory show that defenses at this interface must be evaluated empirically rather than assumed from architectural separation.

\par
We protect this interface with a five-layer Defense-in-Depth (DID) architecture that is implemented and evaluated. $L_1$ verifies report provenance, freshness, and physical admissibility. $L_2$ compares reported path loss and energy with independent predictions from UAV trajectories, sensor positions, and service history. $L_3$ applies queue conservation to detect backlog changes inconsistent with admissible arrivals. At the decision level, $L_4$ repairs schedules that concentrate service or starve sensors, while $L_5$ maintains a tamper-evident audit log and switches to a deterministic scheduler when rejections persist. Removing one layer at a time against a fixed attack would leave later layers untested because $L_1$ blocks a transit-only adversary. Instead, each layer is tested against an adversary capable of bypassing all preceding layers (Fig.~\ref{fig:ladder}). This includes perception compromise that passes $L_1$, backlog-only manipulation that bypasses $L_2$, and an adversary that remains within all admissible regions, where the defense can only limit damage. For each report-side layer, we derive a closed-form manipulation bound and evaluate detection and damage separately, since preventing schedule control does not necessarily prevent performance degradation.

\begin{figure}[!tb]
\centering
\includegraphics[width=0.85\columnwidth]{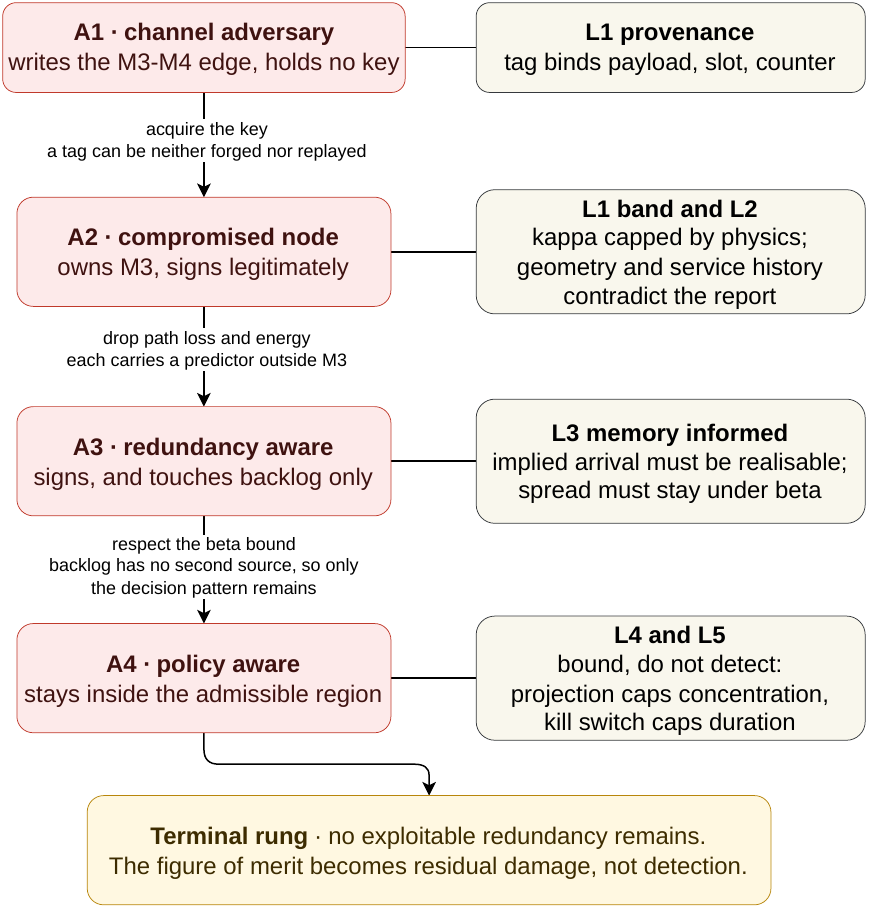}
\caption{Adversary capability ladder and corresponding defense layers. Arrows indicate the capability needed to bypass each layer; at $\mathcal{A}_4$, no exploitable redundancy remains and evaluation shifts from detection to residual damage.
}
\label{fig:ladder}
\end{figure}

\begin{itemize}
\item A DID architecture stated as an operational rather than structural specification, in which each of the five layers is a computable predicate over quantities the pipeline already holds, given with its parameters, its complexity, the adversary class it terminates, and the class it provably does not.

\item An adversary ladder for the perception-reasoning interface whose four classes follow from the redundancy structure of the pipeline: each class removes the redundancy exploited by the previous layer, and the ladder terminates once no exploitable redundancy remains.

\item Three closed-form bounds on adversary budget, one per input-side layer, computable from deployment parameters alone: the channel model caps attainable intensity at $\bar\kappa=1.324$, and at $1.019$ on the realized geometry; the sensing-noise floor places the cross-modal boundary at $\kappa^{\mathrm{det}}_\ell=1.026$; and the queue recursion caps a backlog-only adversary at $\bar\kappa_q=1.2$ once its target is served. Together they place the nominal PMA intensity outside every admissible region and leave a window whose width is quantified rather than asserted.

\item A calibration protocol and a paired evaluation on a locally hosted open-weight model, reporting cost, attack success, per-layer detection and false-positive rates, detection curves, the clean-condition cost of the stack and latency, with thresholds fixed on seeds disjoint from evaluation and the two uncalibrated bounds specified as operator budgets.
\end{itemize}

\section{Related Work} \label{sec:related}

%Recent research has increasingly explored the integration of LLMs, in-context learning (ICL), and agentic AI to enhance the autonomy, intelligence, and security of UAV systems. These developments have expanded the role of LLMs from standalone reasoning models to components of autonomous systems that support perception, planning, coordination, scheduling, and security. At the same time, the increasing integration of LLMs into UAV control loops introduces new security challenges, particularly when decisions depend on information supplied by external perception and communication components.
\subsection{LLM-Assisted UAV Autonomy and Swarm Coordination}

LLMs have been increasingly explored for UAV scheduling, resource allocation, and swarm coordination. Emami et al.~\cite{emami2025icldc} introduce In-Context Learning-based Data Collection (ICLDC), which uses an LLM to generate UAV data-collection schedules from natural-language task descriptions and a rule-based verifier to prevent unsafe actions. The system achieves lower cumulative packet loss than Deep Q-Network (DQN) and maximum-channel-gain baselines, but its vulnerability to manipulated task descriptions highlights the security risks of LLM-assisted decision-making. Flight Resource Allocation Scheme based on LLM-enabled ICL (FRSICL)~\cite{11399950} extends in-context learning to real-time flight resource allocation by jointly optimizing UAV data-collection schedules and velocity using environmental feedback, achieving effective Age of Information (AoI) performance without extensive Deep Reinforcement Learning (DRL) retraining. For multi-UAV coordination, Role-Adaptive LLM-driven Yoked Navigation (RALLY)~\cite{11168174} combines LLM-assisted semantic reasoning, structured communication, dynamic role switching, and Multi-agent Reinforcement Learning (MARL) to improve collaborative navigation and role assignment. At the system level, LAUS~\cite{emami2026laus} integrates perception, memory, reasoning, planning, and action into a closed-loop LLM-centric architecture for autonomous UAV swarms. However, it also highlights the expanded attack surface of such systems, including PMAs that can alter observations and influence swarm decisions and network performance. Together, these studies demonstrate the potential of LLMs and agentic architectures for adaptive UAV autonomy while highlighting the need to secure the information exchanged between perception and reasoning.

\subsection{Security and Robustness of LLM-Enabled UAV Systems}

Security and robustness have become important concerns as LLMs are integrated into autonomous UAV systems. Ferrag et al.~\cite{11397656} survey LLM reasoning and agentic AI for autonomous driving and UAV applications, identifying threats including prompt injection, data poisoning, multimodal perturbations, privacy leakage, and vulnerabilities in cooperative agent communication. For aerial Internet of Things (IoT) security, Li et al.~\cite{li2026multi} develop an LLM-enabled multi-agent intrusion detection framework that uses specialized agents for data processing and adaptive attack classification, achieving over 90\% classification accuracy across benchmark datasets; however, it focuses on network-level intrusion detection rather than manipulation of information supplied to an LLM controller. Ayana et al.~\cite{ayana2025privllmswarm} introduce PrivLLMSwarm, which applies Secure Multi-Party Computation to protect LLM inference for UAV swarm coordination, demonstrating a privacy--utility tradeoff while addressing confidentiality rather than adversarial observation manipulation. For navigation security, Tariq et al.~\cite{tariq2026mission} suggest OrbitGuardNet, combining transformers, Graph Neural Networks (GNNs), Kalman-filter residuals, and an LLM to detect and respond to BeiDou spoofing; although it reports high detection performance in controlled simulations and three-UAV flight tests, its evaluation relies on software-injected rather than open-air Radio Frequency (RF) spoofing. Finally, Aero-LLM~\cite{dharmalingam2025aero} distributes specialized LLMs across onboard, edge, and cloud platforms for inference, anomaly detection, and forecasting, demonstrating the potential of distributed intelligence for UAV security and mission operations. Overall, these studies address important aspects of LLM security, privacy, intrusion detection, and navigation resilience, but do not specifically protect the structured observations exchanged between UAV perception and LLM reasoning.

\subsection{LLM-Assisted UAV Scheduling and Security}

Recent studies have applied LLMs to UAV scheduling and autonomous decision-making, while also identifying vulnerabilities at the perception--reasoning interface \cite{emami2025icldc,emami2026laus}. However, existing evaluations consider different attack surfaces, including prompt-level attacks and manipulation of agent observations, rather than systematically protecting structured sensor states used for resource scheduling. At lower layers of UAV systems, security has been studied through Global Navigation Satellite System (GNSS) spoofing \cite{she2026gnssresilience,geng2025gnsscountermeasure}, jamming \cite{xiang2024jamming}, control-loop attacks \cite{qian2026learningsecurity,zhao2026formation}, and false-data injection \cite{mughal2026gnnids}. Provenance and trust mechanisms \cite{jaigirdar2023provenance,nasiri2024trustdse,akram2024digitaltwintrust} establish trust in devices or data sources, while output-side guardrails \cite{ravichandran2026roboguard,xiang2025guardagent,shamsujjoha2025swisscheese,safiullin2027guardraileval} constrain unsafe agent actions. Memory-poisoning studies \cite{yazdinejad2026memorypoisoning,jing2026dsrm,zhang2025asb} further show that persistent agent state can be exploited.

\subsection{Research Gap}

The above studies demonstrate significant progress in applying LLMs and agentic AI to UAV scheduling, swarm coordination, navigation, intrusion detection, privacy preservation, and cybersecurity; however, limited work specifically protects the perception--reasoning interface, where structured information describing sensor states, communication links, energy, queues, and other operational variables is provided to the LLM. The PMA and ICLDC show that manipulated observations and task inputs can redirect LLM-assisted UAV decisions. Other studies address intrusion detection, privacy preservation, navigation spoofing, and general LLM robustness \cite{11397656,li2026multi,ayana2025privllmswarm,tariq2026mission,dharmalingam2025aero}, but they do not establish a DID mechanism that protects structured state information before it reaches the reasoning process. Moreover, existing approaches do not systematically evaluate how defenses behave as adversarial capabilities increase, such as when an attacker can compromise the perception module, reproduce valid provenance information, or manipulate quantities lacking an independent prediction. This work addresses this gap through an implemented five-layer DID architecture that combines report provenance and physical-admissibility checks, cross-modal consistency analysis, queue-conservation analysis, schedule-level safety checking, and deterministic fallback control, with each layer implemented as a computable predicate using information available within the UAV pipeline.

\section{System Model and Threat Formulation}\label{sec:problem}

\subsection{Network and Mission Model}\label{subsec:network}

A swarm of $U$ UAVs, indexed by $\mathcal{U}=\{1,\dots,U\}$, serves $N$ stationary ground sensors, indexed by $\mathcal{N}=\{1,\dots,N\}$, over a horizon $\mathcal{T}=\{1,\dots,T\}$ of slots of duration $\Delta t$. Sensor $n$ occupies a fixed position $c_n\in\mathbb{R}^2$ inside a square region of side $D$, and UAV $u$ occupies $p_u(t)\in\mathbb{R}^2$ at constant altitude $H$, subject to $\|p_u(t{+}1)-p_u(t)\|\le v_{\max}\Delta t$.

The UAVs fly predetermined trajectories whose waypoints $b_u(t)$ cover the sensors within a service radius $r_{\mathrm{exc}}$. Such trajectories may be designed for capacity, coverage, or propulsion energy, and the approach applies to any given trajectory.% We denote by $b_u(t)$ the waypoint of UAV $u$ at slot $t$, with the reference trajectory advancing by $a$ per slot and fixed at deployment independently of the schedule; within a slot the UAV leaves $b_u(t)$ for the service point $p_u(t)$ of its assigned sensor and returns, so the round trip limits the service radius to $r_{\mathrm{exc}}=(v_{\max}\Delta t-a)/2$.

Writing $d_{n}(z)=\sqrt{H^{2}+\|z-c_n\|^{2}}$ for the separation and $\theta_{n}(z)=\frac{180}{\pi}\arctan(H/\|z-c_n\|)$ for the elevation angle, air-to-ground links follow the probabilistic Line-of-Sight (LoS) model of~\cite{alhourani2014optimal}, in which the elevation angle determines the LoS probability
\begin{equation}\label{eq:plos}
P^{\mathrm{LoS}}_{n}(z)=\frac{1}{1+a_{0}\exp\!\big(\!-b_{0}\,(\theta_{n}(z)-a_{0})\big)},
\end{equation}
with environment coefficients $(a_0,b_0)$, and the LoS and non-LoS excess losses combine into the average path loss
\begin{equation}\label{eq:pathloss}
\begin{split}
L_{n}(z)={}&20\log_{10}\!\Big(\tfrac{4\pi d_{n}(z)}{\lambda}\Big)
 + P^{\mathrm{LoS}}_{n}(z)\,\eta_{\mathrm{LoS}}\\
&+\big(1-P^{\mathrm{LoS}}_{n}(z)\big)\,\eta_{\mathrm{NLoS}},
\end{split}
\end{equation}
where $\lambda$ is the carrier wavelength. The distinction between two evaluation points is load bearing: the reported path loss is evaluated at $z=b_u(t)$, before any assignment exists, and the link-success mapping at $z=p_u(t)$, after it. %Both reduce to the lane of~\eqref{eq:baseline} and the coordinate $c_n$, fixed at deployment and neither produced by $M_3$ nor writable by any adversary here, which lets the geometric predictor of~\eqref{eq:lgeo} be evaluated independently of a compromised perception node.
A transmission attempt on link $(u,n)$ succeeds with probability
\begin{equation}\label{eq:success}
\rho_{u,n}(t)=\Phi\big(L_{n}(p_u(t));\,\tau,k_1,k_2\big)\in[0,1],
\end{equation}
in which $\Phi$ is the monotonically decreasing link-success mapping of the LAUS simulator, parameterized by the loss threshold $\tau$ and link constants $(k_1,k_2)$ and calibrated to a target bit error rate $\mathrm{BER}^{\star}$.

The state of sensor $n$ is the triple $s_n(t)=(q_n(t),e_n(t),\ell_n(t))$, with backlog $q_n(t)\in[0,Q_{\max}]$, residual energy $e_n(t)\in[0,E_0]$, and path loss $\ell_n(t)=\min_{u}L_{n}(b_u(t))$. The scheduler never reads $s_n(t)$ but the report of~\eqref{eq:obs}, which equals $s_n(t)$ up to sensing noise in the absence of an adversary. A scheduling action assigns each UAV to exactly one sensor,
\begin{equation}\label{eq:action}
x(t)=\big[x_{u,n}(t)\big]\in\{0,1\}^{U\times N},
\quad
\sum_{n\in\mathcal{N}}x_{u,n}(t)=1\;\;\forall u,
\end{equation}
and $\mathcal{X}$ is the set of matrices satisfying this constraint. A visited sensor is served at rate $R$ packets per visit, so the number of packets drained is
\begin{equation}\label{eq:service}
\begin{split}
\mu_n(t)=R\,&\mathbb{1}\Big[\textstyle\sum_{u}x_{u,n}(t)\ge 1\Big]\\
\times\,&\mathbb{1}\big[\zeta_{n}(t)\le\max\nolimits_{u:\,x_{u,n}(t)=1}\rho_{u,n}(t)\big],
\end{split}
\end{equation}
where $\mathbb{1}[\cdot]$ is the indicator and $\zeta_n(t)\sim\mathcal{U}[0,1]$ is drawn from the environment stream. Queue and energy evolve as
\begin{align}
q_n(t{+}1)&=\min\Big\{\big[q_n(t)-\mu_n(t)\big]^{+}+\alpha_n(t),\;Q_{\max}\Big\},\label{eq:queue}\\
e_n(t{+}1)&=\big[e_n(t)-\epsilon_{\mathrm{tx}}\mu_n(t)-\epsilon_{\mathrm{idle}}\big]^{+},\label{eq:energy}
\end{align}
with exogenous arrivals $\alpha_n(t)$. The distinction between the endogenous service term and the exogenous arrival term later determines where the adversary ladder terminates.

The realized service $\mu_n(t)$, used by two layers, is not a sensor report: it is formed at the serving UAV from its own delivery receipts and reaches the defense over the swarm's control link rather than the $M_3\!\rightarrow\!M_4$ edge. This matters because~\eqref{eq:emem} and~\eqref{eq:impliedarrival} are affine in $\mu_n$: a class owning $M_3$ that could also report $\tilde\mu_n=\kappa\mu_n-(\kappa-1)\alpha_n$ alongside a backlog scaled by $\kappa$ would make the implied arrival equal the true $\alpha_n$ exactly, rendering Proposition~\ref{prop:kappaq} and Corollary~\ref{cor:noadmissible} vacuous.

Two events incur cost, a packet dropped through overflow and a failed transmission attempt, so
\begin{align}
C(t)&=\sum_{n\in\mathcal{N}}
\Big(\big[\,\overline{q}_n(t)-Q_{\max}\big]^{+}+\nu_n(t)\Big),\label{eq:cost}\\
J&=\sum_{t\in\mathcal{T}}C(t),\label{eq:objective}
\end{align}
where $\overline{q}_n(t)=[q_n(t)-\mu_n(t)]^{+}+\alpha_n(t)$ is the pre-truncation backlog implied by~\eqref{eq:queue} and $\nu_n(t)$ counts failed attempts, each charged as one lost packet. The objective is $\min_{x(1:T)}J$, and $J$ is the primary metric consistent with~\cite{emami2026laus}.

\subsection{Agentic Decision Pipeline}\label{subsec:pipeline}

The scheduler is the closed-loop agent of~\cite{emami2026laus}, whose decision pipeline decomposes into six modules (Fig.~\ref{fig:pipeline}): $M_1$ and $M_2$ acquire telemetry and extract features, $M_3$ serializes them into a structured observation, $M_4$ assembles the prompt, $M_5$ reasons and plans, and $M_6$ validates and actuates. The observation produced by $M_3$ is
\begin{equation}\label{eq:obs}
o(t)=\Big\{\big(\hat q_n(t),\;\hat e_n(t),\;\hat\ell_n(t)\big)\Big\}_{n\in\mathcal{N}},
\end{equation}
where a hat denotes a reported quantity, and $\sigma(t)=\mathrm{ser}(o(t))$ is the serialized form carried on the $M_3\!\rightarrow\!M_4$ edge.

Reported quantities are measurements, not ground truth, and in the absence of an adversary satisfy
\begin{equation}\label{eq:noise}
\hat\phi_n(t)=\phi_n(t)+\nu^{\phi}_n(t),
\qquad \nu^{\phi}_n(t)\sim\mathcal{N}\big(0,(\sigma_\phi)^2\big),
\end{equation}
where $\sigma_\ell$ is the receiver's path-loss estimation error after averaging over a packet burst and $\sigma_e$ the readout error of the battery gauge. We take $\sigma_\ell=0.5$~dB, a sub-decibel receiver accuracy, and $\sigma_e=0.05$~J, one idle-slot drain. UAV deviation from its waypoint affects the predictor of~\eqref{eq:lgeo} rather than the measurement and is treated as model mismatch in~\eqref{eq:kappamis}. This noise sets the scale of the $L_2$ residuals, which Proposition~\ref{prop:l2detect} converts into a detection boundary that a noisier receiver moves rather than invalidates, from $\kappa^{\mathrm{det}}_\ell=1.024$ at $0.5$~dB to $1.050$ at $1$~dB.

Agentic memory retains a rolling window of $W_m$ past decision records,
\begin{equation}\label{eq:memory}
\mathcal{M}(t)=\Big\{\big(\sigma(t{-}j),\;x(t{-}j),\;C(t{-}j)\big)\Big\}_{j=1}^{W_m},
\end{equation}
where $W_m$ sets both the history the prompt carries and the horizon of the cumulative energy residual of~\eqref{eq:ecum}. An energy drift of $\epsilon$ per slot leaves a one-step residual of exactly $\epsilon$ but accumulates to $W_m\epsilon$ across the window, whereas under~\eqref{eq:noise} the clean cumulative residual is the difference of two independent gauge readings, with threshold $z_N\sqrt{2}\,\sigma_e$ irrespective of $W_m$. The cumulative branch therefore dominates the one-step branch precisely when
\begin{equation}\label{eq:wmcond}
W_m>\frac{z_N\sqrt{2}\,\sigma_e}{\epsilon},
\end{equation}
which at a drift of one gauge quantum gives $W_m>4.92$, while a $4096$-token context with a $2048$-token reasoning budget leaves room for the directive, the $60$ scalars of an $N=20$ observation, and about five records; hence $W_m=5$. Module $M_4$ assembles the prompt $\pi(t)=\mathrm{asm}(\sigma(t),\mathcal{M}(t),g)$ from the serialized observation, the memory and the static mission directive $g$, $M_5$ maps it to a candidate assignment, and $M_6$ applies the output-side checks of Section~\ref{sec:did},
\begin{equation}\label{eq:policy}
\tilde x(t)=\mathcal{G}_{\theta}\big(\pi(t)\big),
\qquad
x(t)=M_6\big(\tilde x(t),\;\mathcal{M}(t)\big),
\end{equation}
which closes the loop through~\eqref{eq:service} and~\eqref{eq:memory}. A paired comparison measures the defense rather than decoding noise only if the realized policy $\mathcal{G}^{\dagger}_{\theta}$ is single-valued,
\begin{equation}\label{eq:determinism}
\pi(t)=\pi'(t)\;\Longrightarrow\; \mathcal{G}^{\dagger}_{\theta}\big(\pi(t)\big)
=\mathcal{G}^{\dagger}_{\theta}\big(\pi'(t)\big).
\end{equation}
Greedy decoding is necessary but not sufficient on a batched engine, where different batch neighbors may change the floating-point reduction order, so the property is enforced by memoizing $\pi\mapsto\tilde x$, leaving the seed-indexed environment stream as the only stochasticity in a run.

\begin{figure}[!tb]
\centering
\includegraphics[width=0.78\columnwidth]{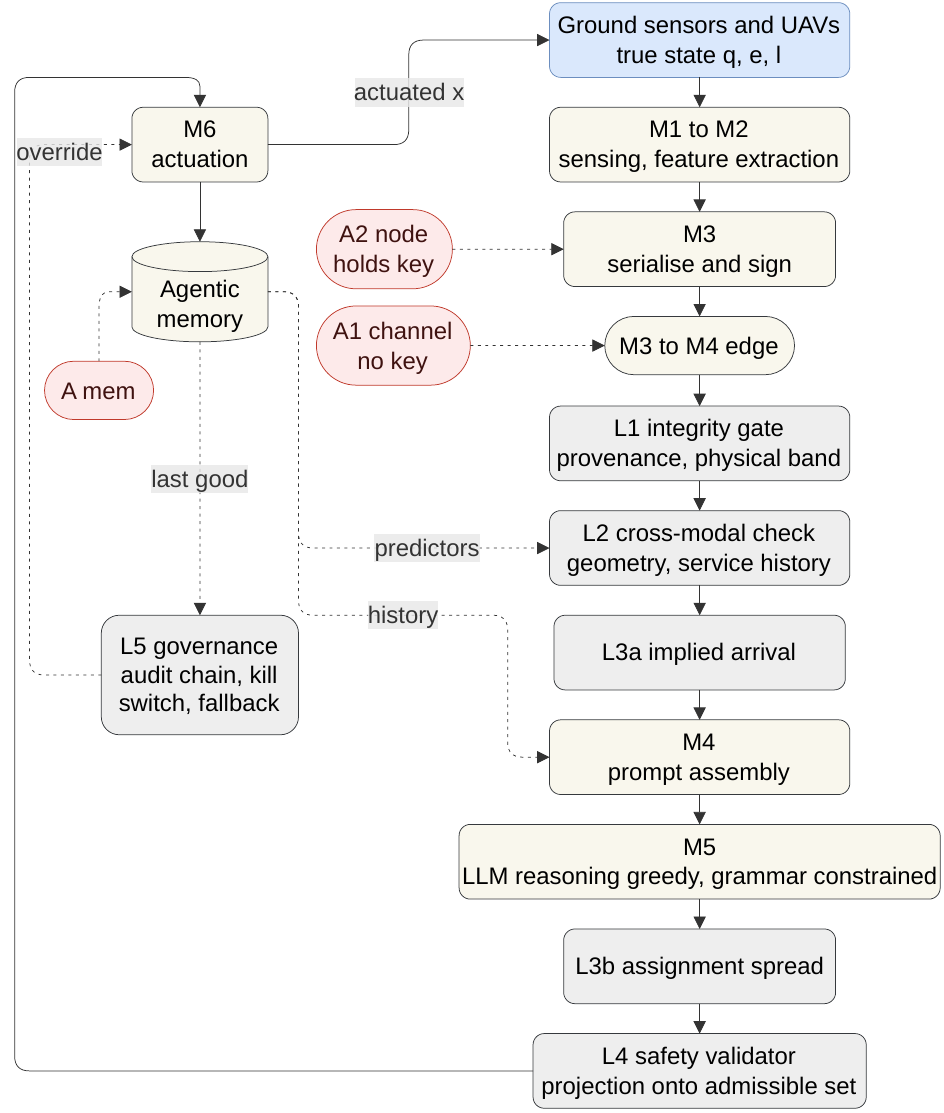}
\caption{Pipeline, attack points, and corresponding defense layers. Input-side layers guard observations before $M_4$, output-side layers guard assignments before $M_6$, and governance spans the slot. $\mathcal{A}_1$ intercepts $M_3\!\rightarrow\!M_4$, $\mathcal{A}_2$ controls $M_3$, and $\mathcal{A}_{\mathrm{mem}}$ poisons agent memory; memory remains independent of observation-side attacks, enabling the predictors in~\eqref{eq:lgeo} and~\eqref{eq:emem}.}

\label{fig:pipeline}
\end{figure}

\subsection{Evaluation Metrics}\label{subsec:metrics}

Let $J^{\mathrm{cl}}$ be the cumulative cost under clean conditions with no defense. The degradation ratio $\Gamma_c=\mathbb{E}[J_c]/\mathbb{E}[J^{\mathrm{cl}}]$ reduces to unity for a fully mitigated attack; the undefended PMA was reported at $14.2\times$, but because the substrate differs we re-evaluate that condition rather than reuse the value. The attack success rate
\begin{equation}\label{eq:asr}
\mathrm{ASR}_c=\frac{1}{T}\sum_{t\in\mathcal{T}}
\mathbb{1}\big[x_c(t)=x^{\mathrm{adv}}(t)\big]
\end{equation}
is the fraction of slots in which the adversary's intended assignment is actuated; unlike $\Gamma$, it separates suppressing control of the schedule from limiting its cost. The full $U\times N$ matrix must match, so partial diversion is priced by $\Gamma$ instead, and the statistic is read against the chance level $N^{-U}=1.25\times10^{-4}$ of a uniform scheduler: the undefended $0.066$ is $528$ times that level and the defended $0.001$ is $8$ times it, so the detectors remove $98.7\%$ of the adversary's control in excess of chance, more under a coarser null model. Detection rates are computed over perturbed slots and false-positive rates over clean runs at the calibration of~\eqref{eq:calib}, and for $\mathcal{A}_2$ also as functions of $\kappa$. We also report the defense tax $\Delta^{\mathrm{tax}}=\mathbb{E}[J^{\mathrm{cl}}_{\mathrm{def}}]-\mathbb{E}[J^{\mathrm{cl}}]$ and the per-layer latency of~\eqref{eq:latency}.

\section{Adversary Model: A Capability Ladder}\label{sec:adversary}

An adversary is the tuple $\mathcal{A}=\langle\mathcal{E},\mathcal{K},\mathcal{V},\mathcal{B}\rangle$ comprising the pipeline elements $\mathcal{E}$ to which it holds read-write access, the key material $\mathcal{K}$ it possesses, the subset $\mathcal{V}\subseteq\{L_1,\dots,L_5\}$ of deployed layers whose specification it knows, and its perturbation budget $\mathcal{B}$. In every class the physical environment, the trajectories, and the model parameters $\theta$ remain unchanged, and only reported values in~\eqref{eq:obs} are altered.

The PMA of LAUS promotes a target set $\mathcal{P}\subset\mathcal{N}$ and suppresses an urgent set $\mathcal{S}\subset\mathcal{N}$, with $\mathcal{P}\cap\mathcal{S}=\emptyset$, by scaling reported features multiplicatively at intensity $\kappa>1$. Writing $\delta_n=+1$ for $n\in\mathcal{P}$, $\delta_n=-1$ for $n\in\mathcal{S}$, and $\delta_n=0$ otherwise, and letting $s_\phi\in\{+1,-1\}$ encode whether an increase in feature $\phi$ raises apparent priority, so $s_q=+1$ while $s_e=s_\ell=-1$, the operator is
\begin{equation}\label{eq:pma}
\tilde\phi_n(t)=\kappa^{\,\delta_n s_\phi}\,\hat\phi_n(t),
\qquad \phi\in\Omega\subseteq\{q,e,\ell\},
\end{equation}
where $\Omega$ is the modified feature subset. The budget is $\mathcal{B}=(\kappa,\Omega,|\mathcal{P}|,|\mathcal{S}|)$, and the configuration of LAUS is recovered at $\kappa=3$ with $\Omega=\{q,e,\ell\}$.
\par
The four classes are not an arbitrary enumeration: each removes exactly the asymmetry the preceding layer exploits, so the ladder terminates once no exploitable redundancy remains.

\emph{$\mathcal{A}_1$, channel adversary}, has $\mathcal{E}=\{M_3\!\rightarrow\!M_4\}$, $\mathcal{K}=\emptyset$, $\mathcal{V}=\emptyset$, $\Omega=\{q,e,\ell\}$. This instantiates the PMA directly: the adversary intercepts the serialized observation in transit but holds no key issued to $M_3$.

\emph{$\mathcal{A}_2$, compromised perception node}, has $\mathcal{E}=\{M_3\}$, $\mathcal{K}=\{k_{M_3}\}$, $\mathcal{V}=\{L_1\}$, $\Omega=\{q,e,\ell\}$. It authenticates the tampered observation legitimately, so detection must rely on physical redundancy.

\emph{$\mathcal{A}_3$, redundancy-aware adversary}, is as $\mathcal{A}_2$ with $\mathcal{V}=\{L_1,L_2\}$ and $\Omega=\{q\}$. The restriction is forced rather than chosen: reported path loss admits an independent prediction from swarm geometry through~\eqref{eq:plos} and~\eqref{eq:pathloss}, reported energy admits one from service history through~\eqref{eq:energy}, and backlog admits neither. Backlog is therefore the only feature a signing adversary can touch without contradicting a source it does not control, while remaining bound by the conservation identity of~\eqref{eq:queue} that $L_3$ enforces.

\emph{$\mathcal{A}_4$, policy-aware adaptive adversary}, retains the access and key material of $\mathcal{A}_2$, has $\mathcal{V}=\{L_1,\dots,L_4\}$ and requires the induced assignment to stay inside the region enforced at $M_6$. Unlike $\mathcal{A}_3$ it takes $\Omega=\{q,\ell\}$ and operates within the interval left open by the propositions, since path loss is detected above $\kappa^{\mathrm{det}}_\ell$ rather than being unusable. Per slot it selects the largest intensity whose residual remains within the assumed threshold and rotates the suppressed sensor it releases so that~\eqref{eq:theta} never binds; energy stays excluded, since~\eqref{eq:ecum} accumulates the drift the one-step check tolerates. It solves
\begin{equation}\label{eq:adaptive}
\max_{\tilde o\in\mathcal{O}_{\mathcal{B}}}\;
\mathbb{E}\big[J(\tilde o)\big]
\;\;\text{s.t.}\;\;
\mathcal{G}_{\theta}\big(\mathrm{asm}(\tilde o,\mathcal{M},g)\big)\in\mathcal{X}_{\mathrm{adm}},
\end{equation}
where $\mathcal{O}_{\mathcal{B}}$ is the set of observations reachable under the budget. The input-side layers then have little residual evidence to exploit, with $L_3$ rejecting too few slots to trigger a response, and the objective shifts from detection to bounding damage. A secondary adversary $\mathcal{A}_{\mathrm{mem}}$ writes a fabricated decision record into the memory~\eqref{eq:memory} rather than into the observation, testing whether defenses calibrated on observation-side manipulation generalize to a different injection point.

\section{Defense-in-Depth Architecture}\label{sec:did}

We replace the block-diagram functions of LAUS with an operational specification in which each layer is a computable predicate over quantities defined above. Each layer is evaluated against the adversary classes that reach it; for $L_1$--$L_4$, layer $L_i$ is associated with the class $\mathcal{A}_i$ that defeats the preceding layer. Thus, $L_4$ and $L_5$ are not detectors: against $\mathcal{A}_4$ they bound damage. Intensity is the capability measure because a bound on $\kappa$ both fixes the layer boundary and quantifies the perturbation magnitude.

\subsection{\texorpdfstring{$L_1$}{L1}: Integrity Gate}\label{subsec:l1}

The gate is the conjunction of two independent tests, which fail against different adversaries,
\begin{equation}\label{eq:l1}
\begin{split}
L_1\big(\sigma(t)\big)={}&\mathbb{1}\big[\mathrm{Vrf}_{k_{M_3}}(\sigma(t))\big]\\
&\times\prod_{n,\phi}\mathbb{1}\big[\hat\phi_n(t)\in[\phi^{\min},\phi^{\max}]\big].
\end{split}
\end{equation}

\subsubsection{Provenance}
Module $M_3$ emits, alongside the observation, a tag
\begin{equation}\label{eq:mac}
\mathrm{tag}(t)=\mathrm{MAC}_{k_{M_3}}\big(\,\mathrm{canon}(\sigma(t))\,\|\,t\,\|\,\mathrm{ctr}(t)\,\big),
\end{equation}
with $\mathrm{canon}(\cdot)$ a canonical byte encoding of~\eqref{eq:obs} and $\mathrm{ctr}(t)$ a strictly monotonic counter. Binding the slot index and counter is necessary, since authenticating the payload alone would let $\mathcal{A}_1$ replay a previously valid observation without forging a tag, so verification at $M_4$ rejects stale or already accepted counters regardless of tag validity. Provenance terminates $\mathcal{A}_1$ and provides no protection against $\mathcal{A}_2$, which signs legitimately; that asymmetry defines the boundary of the first layer.

\subsubsection{Physically Derived Admissibility}
A schema check cannot distinguish manipulation from a legitimate operating condition, since a backlogged sensor on a degraded link produces in-schema values. The bounds here are derived from the channel model instead, so path loss reported from altitude $H$ over a region of side $D$ must satisfy
\begin{equation}\label{eq:rangebounds}
\begin{split}
\ell^{\min}&=20\log_{10}\!\Big(\tfrac{4\pi H}{\lambda}\Big)+\eta_{\mathrm{LoS}},\\
\ell^{\max}&=20\log_{10}\!\Big(\tfrac{4\pi\sqrt{H^{2}+2D^{2}}}{\lambda}\Big)+\eta_{\mathrm{NLoS}},
\end{split}
\end{equation}
obtained by evaluating~\eqref{eq:pathloss} at an overhead sensor under LoS and at the opposite corner under non-LoS; the remaining bounds are $Q_{\max}$ and $E_0$. Since~\eqref{eq:rangebounds} bounds the state while $L_1$ sees noisy measurements, and reported energy starts at exactly $E_0$, the unwidened bounds reject $2.13\%$ of energy and $22.96\%$ of path-loss measurements on clean runs. The bounds for each measured feature are therefore widened by $z_{\mathrm{band}}\sigma_\phi$. We use $z_{\mathrm{band}}=3.5$, giving a path-loss interval of $[71.692,\,98.963]$~dB and zero clean rejections. Backlog is a counter and receives no margin.

\begin{proposition}\label{prop:kappa}
Let $L_1$ admit a reported feature only if it lies in its physically derived band. Then a perturbation of the form~\eqref{eq:pma} at intensity $\kappa$ passes the admissibility test only if $\kappa\le\bar\kappa(t)$, where
\begin{equation}\label{eq:kappabar}
\bar\kappa(t)=\min\left\{
\min_{n\in\mathcal{P}}\left\{\tfrac{Q_{\max}}{\hat q_n(t)},\;
\tfrac{\hat\ell_n(t)}{\ell^{\min}}\right\},\;
\min_{n\in\mathcal{S}}\left\{\tfrac{E_0}{\hat e_n(t)},\;
\tfrac{\ell^{\max}}{\hat\ell_n(t)}\right\}
\right\}.
\end{equation}
The inner minima identify the most restrictive sensor-level constraint within the promoted and suppressed sets and the outer minimum selects the tighter of the two. Two of the six scaled quantities do not appear in either bound, because suppressing backlog on $\mathcal{S}$ and promoting energy on $\mathcal{P}$ both drive the reported value toward zero, and a lower bound at zero imposes no finite constraint when $\kappa\ge1$. An empty set contributes $+\infty$.
\end{proposition}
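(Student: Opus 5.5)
The plan is a direct case analysis over the six scaled quantities. The admissibility test of~\eqref{eq:l1} is a product of per-feature indicators, so a tampered observation passes only if every scaled value $\tilde\phi_n(t)$ lies in its band $[\phi^{\min},\phi^{\max}]$. Each such membership gives a constraint on $\kappa$, and the admissible $\kappa$ is their intersection. Because the statement is a necessary condition, it is enough to show that each constraint implies $\kappa\le$ the corresponding ratio; taking the minimum then gives $\kappa\le\bar\kappa(t)$.

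\textbf{Promoted sensors.} For $n\in\mathcal{P}$ we have $\delta_n=+1$, so~\eqref{eq:pma} gives $\tilde q_n=\kappa\hat q_n$, $\tilde e_n=\kappa^{-1}\hat e_n$ and $\tilde\ell_n=\kappa^{-1}\hat\ell_n$.
\begin{itemize}
\item The backlog upper bound $\tilde q_n\le Q_{\max}$ yields $\kappa\le Q_{\max}/\hat q_n(t)$.
\item The path-loss lower bound $\tilde\ell_n\ge\ell^{\min}$ yields $\kappa\le\hat\ell_n(t)/\ell^{\min}$.
\end{itemize}

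\textbf{Suppressed sensors.} For $n\in\mathcal{S}$ we have $\delta_n=-1$, so $\tilde q_n=\kappa^{-1}\hat q_n$, $\tilde e_n=\kappa\hat e_n$ and $\tilde\ell_n=\kappa\hat\ell_n$.
\begin{itemize}
\item The energy upper bound $\tilde e_n\le E_0$ yields $\kappa\le E_0/\hat e_n(t)$.
\item The path-loss upper bound $\tilde\ell_n\le\ell^{\max}$ yields $\kappa\le\ell^{\max}/\hat\ell_n(t)$.
\end{itemize}

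\textbf{Inert constraints.} The remaining constraints never bind when $\kappa\ge1$.
\begin{itemize}
\item For $n\in\mathcal{P}$, the energy upper bound is $\kappa^{-1}\hat e_n\le\hat e_n$, and for $n\in\mathcal{S}$ the backlog upper bound is $\kappa^{-1}\hat q_n\le\hat q_n$. Both already hold because the unperturbed report is assumed admissible.
\item The lower bounds on the two shrunk quantities, $\tilde q_n$ on $\mathcal{S}$ and $\tilde e_n$ on $\mathcal{P}$, are zero (or a margin below zero). Positive scaling of a nonnegative value keeps it at or above zero, so these give no finite constraint on $\kappa$.
\item The two inflated quantities, $\tilde q_n$ on $\mathcal{P}$ and $\tilde\ell_n,\tilde e_n$ on $\mathcal{S}$, stay above their original values and hence above their lower bounds.
\item The remaining path-loss upper bound on $\mathcal{P}$ and lower bound on $\mathcal{S}$ move inward and therefore hold automatically.
\end{itemize}
This accounts for the remark that two of the six quantities are absent from~\eqref{eq:kappabar}. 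The unscaled sensors with $\delta_n=0$ are unchanged and impose nothing.

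\textbf{Assembly and main obstacle.} Intersecting the constraints gives the inner minima over $\mathcal{P}$ and $\mathcal{S}$ and then the outer minimum. An empty index set imposes no constraint, which matches the $+\infty$ convention. The main obstacle is the treatment of the admissible band. Because of the $z_{\mathrm{band}}\sigma_\phi$ widening, the band limits actually applied are $\ell^{\min}-z_{\mathrm{band}}\sigma_\ell$, $\ell^{\max}+z_{\mathrm{band}}\sigma_\ell$ and $E_0+z_{\mathrm{band}}\sigma_e$. I would read the symbols $\ell^{\min}$, $\ell^{\max}$ and $E_0$ in~\eqref{eq:kappabar} as denoting these widened endpoints of the band that $L_1$ enforces, so that the derivation is literally the band test. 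I would also note that the ratios use positive reported values, which holds for path loss. If $\hat q_n=0$ or $\hat e_n=0$, the corresponding ratio is $+\infty$, consistent with the empty-constraint convention.
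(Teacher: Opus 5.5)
Your proposal is correct and follows the same route as the paper's proof: substitute~\eqref{eq:pma} into each per-feature band constraint, extract the four binding inequalities ($\kappa\hat q_n\le Q_{\max}$ and $\hat\ell_n/\kappa\ge\ell^{\min}$ on $\mathcal{P}$, $\kappa\hat e_n\le E_0$ and $\kappa\hat\ell_n\le\ell^{\max}$ on $\mathcal{S}$), note that the two downward-scaled features meet only a lower bound at zero, and take the tightest condition. Your extra checks (bounds that move inward, zero denominators, and reading the symbols as whichever band endpoints $L_1$ enforces, since the derivation does not depend on them) are harmless refinements that the paper leaves implicit; one small slip is that your bullet on ``the two inflated quantities'' actually lists three.
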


\begin{proof}
See Appendix~\ref{app:proofs}.
\end{proof}

Two implications follow. First, admissibility provides a manipulation budget whose worst-case value is computable from deployment parameters. The interval from $\ell^{\min}=73.44$~dB to $\ell^{\max}=97.21$~dB gives the worst-case admissible scaling $\bar\kappa\le1.324$, or $1.380$ with the measurement margin, while over clean calibration slots~\eqref{eq:kappabar} has mean $1.019$ and maximum $1.041$, because observed path loss occupies a small part of the interval; the nominal $\kappa=3$ is thus infeasible by more than a factor of two. In the opening slot, where every sensor reports its full energy, the unwidened bound gives $\bar\kappa(0)=1$ exactly and the deployed band $1.0035$. Second, the bound is asymmetric. Promoted backlog and path loss are bounded by $Q_{\max}$ and $\ell^{\min}$, suppressed energy and path loss by $E_0$ and $\ell^{\max}$, but the two directions that drive a value toward zero remain unconstrained: a congested sensor can be made to look idle and a healthy sensor can be made to appear depleted, and both survive $L_1$.

When the gate rejects, the offending fields revert to their values in the last accepted observation, and the substitution is recorded, so the loop never stalls. This stale-hold rule is inherited from LAUS and drives the detection cost of Section~\ref{subsec:results}.

\subsection{\texorpdfstring{$L_2$}{L2}: Cross-Modal Consistency}\label{subsec:l2}

Against $\mathcal{A}_2$ the observation is authentic and in band, so detection must rest on redundancy between modules, and two features each admit an independent predictor that bypasses $M_3$. The geometric predictor evaluates~\eqref{eq:pathloss} directly,
\begin{equation}\label{eq:lgeo}
\ell^{\mathrm{geo}}_n(t)=\min_{u\in\mathcal{U}} L_{n}\big(b_u(t)\big),
\end{equation}
from the trajectory waypoints $b_u(t)$ and the sensor coordinates $c_n$, both fixed at deployment and neither produced by $M_3$, so it is a closed-form function of $t$ that requires no telemetry. The energy predictor is applied over one step rather than integrated from initialization, so error does not accumulate,
\begin{equation}\label{eq:emem}
e^{\mathrm{mem}}_n(t)=\big[\hat e_n(t{-}1)-\epsilon_{\mathrm{tx}}\mu_n(t{-}1)-\epsilon_{\mathrm{idle}}\big]^{+},
\end{equation}
with $\hat e_n(t{-}1)$ and $\mu_n(t{-}1)$ read from memory. Since an adversary could drift reported energy below $\tau_{\mathrm{cmc}}$ each slot and accumulate an arbitrary offset, the layer also evaluates the cumulative residual
\begin{equation}\label{eq:ecum}
r^{e,\mathrm{cum}}_n(t)=\Big|\hat e_n(t)-\hat e_n(t{-}W_m)
+\!\!\sum_{j=1}^{W_m}\!\!\big(\epsilon_{\mathrm{tx}}\mu_n(t{-}j)+\epsilon_{\mathrm{idle}}\big)\Big|.
\end{equation}
The one-step residuals are
\begin{equation}\label{eq:residual}
r^{\ell}_n(t)=\big|\hat\ell_n(t)-\ell^{\mathrm{geo}}_n(t)\big|,
\quad
r^{e}_n(t)=\big|\hat e_n(t)-e^{\mathrm{mem}}_n(t)\big|,
\end{equation}
and the layer accepts only if these and~\eqref{eq:ecum} fall below thresholds obtained from~\eqref{eq:calib} on clean runs. The detection power of $L_2$ against $\mathcal{A}_2$ is then a threshold in $\kappa$ that follows from~\eqref{eq:noise} in closed form.

\begin{proposition}\label{prop:l2detect}
Let $\tau_{\mathrm{cmc}}$ be the $(1-\alpha)$ quantile of the clean per-slot residual. Because the layer accepts on the maximum over $N$ sensors, that statistic is the maximum of $N$ half-normal variates under~\eqref{eq:noise}, so $\tau_{\mathrm{cmc}}=z_N\sigma_\ell$ with $z_N=\Phi_{\mathcal{N}}^{-1}\!\big(\tfrac12\big(1+(1-\alpha)^{1/N}\big)\big)$, where $\Phi_{\mathcal{N}}$ is the standard normal distribution function, distinct from the link-success mapping $\Phi$ of~\eqref{eq:success}, and $z_N$ exceeds the single-scalar quantile $z_{1-\alpha}$. Let the adversary retain $\ell\in\Omega$ at intensity $\kappa$. Then the path-loss residual of~\eqref{eq:residual} exceeds $\tau_{\mathrm{cmc}}$, and $L_2$ rejects, for every
\begin{equation}\label{eq:kappadetect}
\kappa\;\ge\;\kappa^{\mathrm{det}}_\ell=
\max\left\{
\frac{\ell^{\min}}{\ell^{\min}-\tau_{\mathrm{cmc}}},\;
1+\frac{\tau_{\mathrm{cmc}}}{\ell^{\min}}
\right\},
\end{equation}
where the first argument governs the promoted set and the second the suppressed set, both at the worst case $\ell_n=\ell^{\min}$.
\end{proposition}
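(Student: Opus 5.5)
The proof has two parts: first identify the threshold $\tau_{\mathrm{cmc}}$, then bound the residual from below under the PMA operator~\eqref{eq:pma}.

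\emph{Threshold.} On clean runs, model mismatch is absorbed into~\eqref{eq:kappamis}, so $\hat\ell_n(t)-\ell^{\mathrm{geo}}_n(t)=\nu^{\ell}_n(t)$. By~\eqref{eq:noise} these are i.i.d.\ $\mathcal{N}(0,\sigma_\ell^2)$ across $n$, and each $r^\ell_n$ is half-normal with distribution function $2\Phi_{\mathcal{N}}(x/\sigma_\ell)-1$. The accepted statistic is $\max_n r^\ell_n$, whose distribution function is $(2\Phi_{\mathcal{N}}(x/\sigma_\ell)-1)^N$ by independence. Setting this equal to $1-\alpha$ and inverting gives $\tau_{\mathrm{cmc}}=z_N\sigma_\ell$ with the stated $z_N$. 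Monotonicity of $\Phi_{\mathcal{N}}^{-1}$ together with $(1-\alpha)^{1/N}>1-\alpha$ for $N\ge2$ yields $z_N>z_{1-\alpha/2}>z_{1-\alpha}$.

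\emph{Residual under attack.} A perturbed report $\tilde\ell_n=\kappa^{\delta_n s_\ell}\hat\ell_n$ satisfies
\[
\tilde\ell_n-\ell^{\mathrm{geo}}_n=\big(\kappa^{-\delta_n}-1\big)\hat\ell_n+\nu^\ell_n .
\]
I would state the deterministic part as the criterion, which is the detection boundary quoted in the paper, treating the noise term as absorbed into the threshold. Under this convention, detection requires $|\kappa^{-\delta_n}-1|\,\ell_n\ge\tau_{\mathrm{cmc}}$.

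For $n\in\mathcal{S}$ ($\delta_n=-1$) the condition is $(\kappa-1)\ell_n\ge\tau_{\mathrm{cmc}}$. For $n\in\mathcal{P}$ ($\delta_n=+1$) it is $(1-1/\kappa)\ell_n\ge\tau_{\mathrm{cmc}}$, i.e.\ $\kappa\ge\ell_n/(\ell_n-\tau_{\mathrm{cmc}})$, which requires $\tau_{\mathrm{cmc}}<\ell_n$. Both left-hand sides increase in $\ell_n$, so the worst case over admissible states is the smallest path loss, $\ell_n=\ell^{\min}$ from~\eqref{eq:rangebounds}. Substituting gives the two arguments of~\eqref{eq:kappadetect}.

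\emph{Combining the sets.} Since $\ell/(\ell-\tau)\ge1+\tau/\ell$, the promoted condition is the binding one. Taking the maximum of the two boundaries therefore guarantees that the residual exceeds the threshold whichever set the adversary targets, and this is what the statement asserts.

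\emph{Main obstacle.} The delicate step is the role of noise in the attacked residual. A literal "rejects for every $\kappa$" cannot hold almost surely, because $\nu^\ell_n$ may cancel the offset. I would therefore state the result for the noise-free report, or equivalently in the sense that the expected residual reaches the threshold. I would also note that both the $\ell^{\min}$ worst case and the mismatch term of~\eqref{eq:kappamis} only shift the constant, not the form of the bound.
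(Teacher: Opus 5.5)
Your proof is correct and follows the paper's approach for the detection boundary. Like the paper, you compute the noise-free residual $\ell_n(1-\kappa^{-1})$ on $\mathcal{P}$ and $\ell_n(\kappa-1)$ on $\mathcal{S}$, reduce to the worst case $\ell_n=\ell^{\min}$ by monotonicity, and note that the promoted term always binds.

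Your proof also goes beyond the paper in two places. You derive $\tau_{\mathrm{cmc}}=z_N\sigma_\ell$ from the maximum of $N$ i.i.d.\ half-normals, which the paper asserts without proof. You also make explicit the noise-free idealization that the paper uses silently when it writes the residual as $\ell_n(1-\kappa^{-1})$.
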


\begin{proof}
See Appendix~\ref{app:proofs}.
\end{proof}

The boundary depends only on the band and the noise scale, so it is known before any run. The analytical $z_N=3.479$ gives $\tau_{\mathrm{cmc}}=1.740$~dB and $\kappa^{\mathrm{det}}_\ell=1.024$; the deployed threshold, the calibrated quantile of $1.787$~dB plus the $5\%$ margin of Section~\ref{subsec:calibration}, is $1.877$~dB and gives $1.026$, whereas the single-scalar quantile would have given $1.018$, below $\kappa=1.02$, where measured detection is only $0.767$. Together, Proposition~\ref{prop:kappa} (realized mean $1.019$) and Proposition~\ref{prop:l2detect} (all perturbations detected at $\kappa\ge1.026$) confine an adversary that is admissible and undetected to roughly $\kappa\in[1,1.02]$, a window that closes as $\sigma_\ell\rightarrow0$ under the model assumptions. Appendix~\ref{app:mismatch} shows that this boundary is a property of the sensing-noise floor rather than of the multiplicative form of~\eqref{eq:pma}, and quantifies through~\eqref{eq:kappamis} how model mismatch in $\ell^{\mathrm{geo}}_n$ shifts it. Backlog has no independent measurement, which motivates $\mathcal{A}_3$ and a layer exploiting the conservation law governing backlog.

\subsection{\texorpdfstring{$L_3$}{L3}: Memory-Informed Anomaly Detection}\label{subsec:l3}

Against $\mathcal{A}_3$ the observation is authentic, in band, and consistent with every independent predictor. One test remains, and it tests the conservation identity the backlog must satisfy~\cite{mogollon2026physics} rather than a second measurement of it. Arrivals are unobserved but not unconstrained, so inverting the queue recursion over one step gives the arrival count implied by two consecutive reports,
\begin{equation}\label{eq:impliedarrival}
\hat\alpha_n(t{-}1)=\hat q_n(t)-\big[\hat q_n(t{-}1)-\mu_n(t{-}1)\big]^{+},
\end{equation}
which must lie in $[0,\alpha^{\max}]$ for any physically realizable arrival process.

\begin{proposition}\label{prop:kappaq}
Let sensor $n\in\mathcal{P}$ carry a persistent multiplicative backlog perturbation at intensity $\kappa$, so $\hat q_n(t')=\kappa q_n(t')$ throughout the window, and suppose neither clamp in~\eqref{eq:queue} is active, that is $\kappa q_n(t{-}1)\ge\mu_n(t{-}1)$ and $\kappa q_n(t)\le Q_{\max}$. Then
\begin{equation}\label{eq:alphahat}
\hat\alpha_n(t{-}1)=\kappa\,\alpha_n(t{-}1)-(\kappa-1)\,\mu_n(t{-}1),
\end{equation}
and the perturbation passes the admissibility test only if
\begin{equation}\label{eq:kappaq}
\kappa\le\bar\kappa_q(t)=
\begin{cases}
\dfrac{\mu_n(t{-}1)}{\mu_n(t{-}1)-\alpha_n(t{-}1)}, & \mu_n(t{-}1)>0,\\[1.6ex]
\dfrac{\alpha^{\max}}{\alpha_n(t{-}1)}, & \mu_n(t{-}1)=0.
\end{cases}
\end{equation}
\end{proposition}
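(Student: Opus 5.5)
The plan is to substitute the perturbed reports directly into the implied-arrival formula~\eqref{eq:impliedarrival} and then read off when the result leaves $[0,\alpha^{\max}]$. For the first claim, set $\hat q_n(t')=\kappa q_n(t')$ at $t'=t-1$ and $t'=t$. The hypothesis $\kappa q_n(t-1)\ge\mu_n(t-1)$ lets us drop the positive part in~\eqref{eq:impliedarrival}, which gives $\hat\alpha_n(t-1)=\kappa q_n(t)-\kappa q_n(t-1)+\mu_n(t-1)$. Next we need the true recursion~\eqref{eq:queue} with no clamp active, so that $q_n(t)=q_n(t-1)-\mu_n(t-1)+\alpha_n(t-1)$. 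The stated hypotheses are on the scaled quantities. Because $\kappa>1$ and $\kappa q_n(t)\le Q_{\max}$, we get $q_n(t)\le Q_{\max}$, so the upper clamp is inactive. For the lower clamp, I will interpret the proposition's "neither clamp active" as including $q_n(t-1)\ge\mu_n(t-1)$ for the true queue, and state this explicitly; it is the natural reading of the hypothesis. Substituting then gives $\hat\alpha_n(t-1)=\kappa(\alpha_n(t-1)-\mu_n(t-1))+\mu_n(t-1)$, which rearranges to~\eqref{eq:alphahat}.

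For the bound, note that $\hat\alpha_n(t-1)$ is affine and decreasing in $\kappa$ whenever $\mu_n(t-1)>\alpha_n(t-1)$, and it equals the true $\alpha_n(t-1)\in[0,\alpha^{\max}]$ at $\kappa=1$. I split by the service indicator.

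\emph{Case $\mu_n(t-1)>0$.} Here the target was served. The lower admissibility constraint $\hat\alpha_n(t-1)\ge0$ reads $\kappa(\mu_n-\alpha_n)\le\mu_n$. If $\mu_n>\alpha_n$, this gives $\kappa\le\mu_n/(\mu_n-\alpha_n)$. If $\mu_n\le\alpha_n$, the constraint imposes nothing, and the formula should be read as $+\infty$, in line with the paper's convention for vacuous bounds. This should be remarked in the proof; the paper's numerical value $\bar\kappa_q=1.2$, with $R$ exceeding the arrival, is consistent with the first subcase. The upper constraint $\hat\alpha_n\le\alpha^{\max}$ is automatically satisfied whenever $\mu_n\ge\alpha_n$, since $\hat\alpha_n\le\alpha_n$ there. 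So the binding constraint is the one stated.

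\emph{Case $\mu_n(t-1)=0$.} Here $\hat\alpha_n=\kappa\alpha_n\ge0$, and the constraint $\le\alpha^{\max}$ gives $\kappa\le\alpha^{\max}/\alpha_n$, which is $+\infty$ when $\alpha_n=0$.

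Since the claim is only a necessary condition ("passes only if"), it suffices to show that exceeding the bound forces $\hat\alpha_n\notin[0,\alpha^{\max}]$. That is exactly what the two cases establish.

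The main obstacle is bookkeeping rather than depth. The clamp hypotheses must be stated so that both the reported recursion and the true recursion are linear. The degenerate subcases ($\mu_n\le\alpha_n$, or $\alpha_n=0$) must be handled by an explicit $+\infty$ convention, so that~\eqref{eq:kappaq} is not a division by zero or a negative bound.
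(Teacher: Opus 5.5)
Your proposal is correct and follows the paper's proof. Like the paper, you substitute $\hat q_n=\kappa q_n$ into~\eqref{eq:impliedarrival}, use the unclamped recursion~\eqref{eq:queue} to get~\eqref{eq:alphahat}, and split on $\mu_n(t{-}1)>0$ (where $\hat\alpha_n$ goes negative) versus $\mu_n(t{-}1)=0$ (where it exceeds $\alpha^{\max}$); your extra bookkeeping is sound and fills in what the paper leaves implicit, namely the true-queue condition $q_n(t{-}1)\ge\mu_n(t{-}1)$ that the paper uses tacitly when it writes the recursion without clamps, and the $+\infty$ convention for $0<\mu_n(t{-}1)\le\alpha_n(t{-}1)$, a case the paper avoids because $\mu_n=R>\alpha^{\max}$ in its setting.
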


\begin{proof}
See Appendix~\ref{app:proofs}.
\end{proof}

The served branch is the operative one. Since $\alpha_n\le\alpha^{\max}$, the bound is at most $R/(R-\alpha^{\max})=24/20=1.2$ under Table~\ref{tab:params}, so any intensity above $1.2$ is rejected the first time a promoted sensor is served; at $\kappa=3$ the implied arrival is $3\alpha_n-48\le-36$. This bounds $\mathcal{A}_3$ by design rather than by calibration: draining $\mu$ packets must reduce the reported backlog by $\mu$, whereas an inflated report falls by $\kappa\mu$, leaving $(\kappa-1)\mu$ packets no non-negative arrival can explain, and an adversary that keeps its target unserved to stay hidden forfeits the objective. When the inner clamp is active the test falls back to the unserved branch, and a field exceeding $Q_{\max}$ has already been repaired by the band check. The check costs $O(N)$.

\begin{corollary}\label{cor:noadmissible}
Let arrivals be drawn independently of the service outcome from a distribution clipped at $\alpha^{\max}$, so that $\pi_{\max}=\Pr[\alpha_n=\alpha^{\max}]>0$ and the operator knows $\alpha^{\max}$ exactly, and let a promoted sensor carry a persistent perturbation at any $\kappa>1$. Then in every slot the test rejects with probability at least $\pi_{\max}\Pr[\mu_n=0]$, and with probability one whenever $\mu_n>0$ and $\kappa>R/(R-\alpha^{\max})$. The slots until first rejection are stochastically dominated by a geometric variable whose expectation is finite for every $\kappa>1$.
\end{corollary}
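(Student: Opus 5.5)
The plan is to build on the implied-arrival identity~\eqref{eq:alphahat} of Proposition~\ref{prop:kappaq} and show that, in every slot, at least one of two disjoint events forces $\hat\alpha_n(t{-}1)\notin[0,\alpha^{\max}]$. One event is tied to service, the other to an extremal arrival. The two events are then combined through the independence of arrivals and service, and the per-slot bound is turned into a geometric tail bound.

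\emph{Unserved branch.} Condition on $\mu_n(t{-}1)=0$. Then~\eqref{eq:alphahat} reduces to $\hat\alpha_n(t{-}1)=\kappa\alpha_n(t{-}1)$. On the event $\alpha_n(t{-}1)=\alpha^{\max}$ this equals $\kappa\alpha^{\max}>\alpha^{\max}$ for every $\kappa>1$, so the upper end of the admissible interval is violated and the test rejects. Since arrivals are drawn independently of the service outcome, this event has probability $\pi_{\max}\Pr[\mu_n=0]$.

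\emph{Served branch.} Condition on $\mu_n(t{-}1)=R$. Then $\hat\alpha_n=\kappa\alpha_n-(\kappa-1)R\le \kappa\alpha^{\max}-(\kappa-1)R$. This is negative exactly when $\kappa>R/(R-\alpha^{\max})$, which is the served branch of~\eqref{eq:kappaq} evaluated at its maximum. Rejection is therefore certain whenever $\mu_n>0$ and $\kappa$ exceeds that value. For $1<\kappa\le R/(R-\alpha^{\max})$, nothing is claimed for the served slot beyond the unconditional bound already obtained. Because the events in the two branches are disjoint, the per-slot rejection probability is at least $p:=\pi_{\max}\Pr[\mu_n=0]$.

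\emph{Geometric domination.} Let $F_t$ be the event that the test does not reject in slot $t$. Conditioning on the history up to $t{-}1$, the per-slot bound still applies, because $\alpha_n(t{-}1)$ is drawn fresh and independently of service. This gives $\Pr[F_t\mid F_1,\dots,F_{t-1}]\le 1-p$. Chaining these conditional bounds yields $\Pr[\text{no rejection by } t]\le(1-p)^t$. This is the stochastic domination by a geometric variable with mean at most $1/p$.

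\emph{Main obstacle.} The delicate step is keeping $p$ uniform across slots. $\Pr[\mu_n=0]$ depends on the scheduler and on the history, so I would interpret it as a lower bound $\inf_t\Pr[\mu_n(t)=0\mid\text{history}]$. I would then argue that this bound is positive using the following case split. If the promoted sensor is served every slot, the served branch fires with probability one when $\kappa>1.2$. When $\kappa\le1.2$, I would use the link-failure randomness in~\eqref{eq:service}: the term $\zeta_n>\rho$ gives a positive probability of $\mu_n=0$ unless $\rho=1$.

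A second point to check is that the no-clamp conditions of Proposition~\ref{prop:kappaq} hold. The inner clamp is inactive in the unserved case, since $\mu_n=0$ makes it trivial. If the outer clamp at $Q_{\max}$ binds, the field is repaired by the band check, as the text notes. In that case I would treat the clamped slots as rejections by $L_1$, which only strengthens the bound.
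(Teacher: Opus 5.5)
Your proposal is correct and follows the paper's own proof. It uses the same exhaustive split on $\mu_n(t{-}1)$: the unserved branch rejects on the $\alpha_n=\alpha^{\max}$ draw for every $\kappa>1$, the served branch rejects surely above $R/(R-\alpha^{\max})$, and the geometric bound is then read off the per-slot bound.

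Your conditional chaining and your demand for a history-uniform lower bound on $\Pr[\mu_n=0]$ make explicit what the paper only asserts. The link-failure bound $\Pr[\mu_n=0]\ge 1-\sup_{u,t}\rho_{u,n}(t)$ already covers every $\kappa>1$ by itself. It is positive whenever the link-success mapping stays bounded away from one, a premise the corollary leaves implicit.

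Your $\kappa>1.2$ sub-case is therefore unnecessary, and it is shaky as stated. A sensor served every slot is soon drained below $R$, where the no-clamp premise of Proposition~\ref{prop:kappaq} fails, so the served-branch certainty is not available there.
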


\begin{proof}
See Appendix~\ref{app:proofs}.
\end{proof}

Propositions~\ref{prop:kappa} and~\ref{prop:l2detect} leave $\mathcal{A}_2$ a narrow interval, while Corollary~\ref{cor:noadmissible} leaves $\mathcal{A}_3$ none: every $\kappa>1$ is detected in finite expected time, so a smaller $\kappa$ only delays detection. The premise is a modeling choice: the arrival cap is attained and known in the simulator of~\cite{emami2026laus}, but under an unbounded arrival law or a loose declared cap $\pi_{\max}\rightarrow0$, and an adversary with $\kappa\le1.2$ keeping its target unserved is not detected at all.

A second, far more sensitive mechanism operates at the onset slot $t_0$ alone, where only one of the two reports entering~\eqref{eq:impliedarrival} is scaled, $\hat q_n(t_0)=\kappa q_n(t_0)$ but $\hat q_n(t_0{-}1)=q_n(t_0{-}1)$, so
\begin{equation}\label{eq:onsettransient}
\hat\alpha_n(t_0{-}1)=\kappa\,\alpha_n(t_0{-}1)
+(\kappa-1)\big[q_n(t_0{-}1)-\mu_n(t_0{-}1)\big]^{+},
\end{equation}
whose second term, absent from~\eqref{eq:alphahat}, scales with the standing backlog. Rejection occurs whenever $\kappa>(\alpha^{\max}+\bar q)/(\alpha_n+\bar q)$ with $\bar q=[q_n(t_0{-}1)-\mu_n(t_0{-}1)]^{+}$, which for an unserved sensor with the initial backlog of $21$ packets gives $\kappa>1.06$. An attack present from $t=1$ has no unperturbed predecessor and does not trigger this transient mechanism, so the evaluation reports onset at $t=1$ and $t=\lceil T/3\rceil$ separately.

\subsection{\texorpdfstring{$L_4$}{L4}: Safety Validator}\label{subsec:l4}

The first statistic is the largest share of assignments captured by any $k$ sensors over a window of $W$ slots,
\begin{equation}\label{eq:psi}
\Psi_k(t;W)=\max_{\substack{S\subseteq\mathcal{N}\\|S|=k}}
\frac{1}{UW}\sum_{t'=t-W+1}^{t}\;\sum_{u\in\mathcal{U}}\;\sum_{n\in S}x_{u,n}(t'),
\end{equation}
computed on the actuated pattern, rather than on a reported feature, so an observation-side adversary cannot modify it directly. The denominator $UW$ makes $\Psi_k$ a share of a fixed budget rather than a rate, since a rate would be biased upward before the window fills and would trigger on $32.7\%$ of clean warm-up slots. The pair $(k,W)$ must satisfy
\begin{equation}\label{eq:kwcond}
k<U,
\qquad
W>\max\Big\{\frac{R}{\mathbb{E}[\alpha]},\;T_b\Big\},
\end{equation}
where $T_b$ is the trajectory period in slots: for $k\ge U$ the $k$ most visited sensors cover the swarm in one slot, and a window shorter than either the revisit period $R/\mathbb{E}[\alpha]$ or $T_b$ saturates $\Psi_k$ regardless of the adversary. These periods are $9.2$ and $10$ slots, so $(k,W)=(2,12)$; under a deterministic reference scheduler the clean-attacked separation for $k=2$ rises from $4.0$ pooled standard deviations at $W=8$ to $15.0$ at $W=12$ and $21.9$ at $W=20$.

Concentration bounds the service any $k$ sensors may capture, but not the service denied to any one sensor. The complementary statistic is the longest interval any sensor has gone unserved,
\begin{equation}\label{eq:theta}
\Theta(t;x)=\max_{n\in\mathcal{N}}\big(t-\varsigma_n(t)\big),
\end{equation}
with $\varsigma_n$ as in~\eqref{eq:repair}, evaluated on the proposal so a sensor served now resets to zero and one never visited counts as $t+1$; it needs no window, since $\Theta(t)\le t+1$. Both constraints are needed because the attack redistributes rather than concentrates service, changing the identity of the underserved set while leaving concentration nearly unchanged: over the calibration runs the maximum of $\Psi_2$ is $13/36$ clean and $14/36$ attacked, whereas $\Theta$ exceeds $18$ in no clean slot but in $39.5\%$ of attacked slots, with its maximum increasing from $18$ to $30$, corresponding to a sensor never served. A validator built on~\eqref{eq:psi} alone would miss this attack.

The validator projects the proposal onto the region jointly induced by the two statistics,
\begin{equation}\label{eq:fair}
\mathcal{X}_{\mathrm{adm}}=\big\{x\in\mathcal{X}:
\Psi_k(t;W)\le\beta_k \;\wedge\; \Theta(t;x)\le\theta_{\max}\big\},
\end{equation}
and enforces it by repair rather than rejection, since a rejected assignment still leaves the swarm to be scheduled. While either constraint is violated, the procedure removes one UAV from the sensor that constraint objects to, the arg-max subset $S^{\star}$ attaining $\Psi_k$ in the first case and any served sensor in the second, and reassigns it to
\begin{equation}\label{eq:repair}
\begin{split}
n^{\star}&=\arg\max_{n\in\mathcal{R}_u(t)\setminus S^{\star}}\big(t-\varsigma_n(t)\big),\\
\mathcal{R}_u(t)&=\big\{n:\|c_n-b_u(t)\|\le r_{\mathrm{exc}}\big\},\\
\varsigma_n(t)&=\max\big\{t'<t:\textstyle\sum_u x_{u,n}(t')\ge1\big\}.
\end{split}
\end{equation}
Reachability and time since last visit come from the deployment configuration and the visit history, not from a reported feature, so a compromised observation cannot steer the repair. Ties break by sensor index, keeping the repair consistent with~\eqref{eq:determinism}, and it terminates in at most $U$ iterations with cost $O(N)$.

\subsection{\texorpdfstring{$L_5$}{L5}: Governance, Audit, and Fallback}\label{subsec:l5}

The governance layer performs no detection; it records events and bounds the duration of an undetected compromise. Each slot appends a record $\varrho_t=(t,\mathrm{H}(\sigma(t)),L_{1:4}\text{ verdicts},\tilde x(t),x(t))$ to a hash-chained log,
\begin{equation}\label{eq:hashchain}
h_t=\mathrm{H}\big(h_{t-1}\,\|\,\varrho_t\big),\qquad h_0=\mathrm{H}(\text{nonce}),
\end{equation}
so retrospective alteration of any verdict would invalidate every later digest, which matters when a compromised perception node could otherwise erase the evidence of its own rejections. The kill switch monitors the rejection rate over a sliding window of $W_g$ slots. With the per-slot rejection indicator
\begin{equation}\label{eq:rejflag}
r(t)=\mathbb{1}\Big[\mathrm{flag}_1(t)\vee\mathrm{flag}_2(t)\vee\mathrm{flag}_{3}(t)
\vee\big(\tilde x(t)\notin\mathcal{X}_{\mathrm{adm}}\big)\Big],
\end{equation}
the switch replaces the policy output with the fallback once
\begin{equation}\label{eq:killswitch}
\frac{1}{W_g}\sum_{t'=t-W_g+1}^{t} r(t')>\gamma
\;\Longrightarrow\;
x(t)=\Pi_{\mathrm{fb}}\big(\mathcal{M}(t)\big),
\end{equation}
where the fallback is the deterministic scheduler
\begin{equation}\label{eq:fallback}
\Pi_{\mathrm{fb}}\big(\mathcal{M}(t)\big)\in
\arg\max_{x\in\mathcal{X}}\sum_{u,n}x_{u,n}\,
\mathbb{1}\big[n\in\mathcal{R}_u(t)\big]\,
\big(t-\varsigma_n(t)\big)\,\check q_n,
\end{equation}
with $\check q_n$ the backlog in the most recent observation accepted jointly by $L_1$ and $L_2$. It invokes no language model, so prompt manipulation cannot propagate into it; it does not read the live observation, so it does not inherit the corruption that triggered it, unlike any scheduler consuming $\tilde o(t)$, including the in-context baseline of~\cite{emami2025icldc}; and it schedules only reachable sensors. Once triggered, the switch stays in fallback for at least $W_h$ slots and readmission requires $W_h$ consecutive clean slots, preventing rapid switching near $\gamma$.

\subsection{Composition, Precedence, and Complexity}\label{subsec:composition}

Layers $L_1$ to $L_3$ execute before prompt assembly, $L_4$ after inference, and $L_5$ operates across the slot (Algorithm~\ref{alg:loop}). Two precedence rules are not implied by this ordering and must therefore be specified explicitly. Input-side rejection short-circuits inference when the substituted observation reproduces an already evaluated prompt. The joint $L_1$--$L_2$ verdict is also informative: $L_1$ failing with $L_2$ passing is consistent with a transport fault and is handled locally, whereas $L_1$ passing with $L_2$ failing indicates an authenticated but physically inconsistent observation and is escalated to $L_5$. The per-slot computational cost is $O(UN+N\log N)$, dominated by the cross-modal recomputation of~\eqref{eq:pathloss} over UAV and sensor pairs and the validator's sorting operation, which Appendix~\ref{app:scaling} verifies empirically.

\begin{algorithm}[!tb]
\caption{Defended per-slot control loop}\label{alg:loop}
\begin{algorithmic}[1]
\State \textbf{input:} $\sigma(t),\mathrm{tag}(t),\mathcal{M}(t)$
\If{$\neg\,\mathrm{Vrf}(\sigma(t),\mathrm{tag}(t))$ \textbf{or} band check fails}
  \State repair offending fields from last accepted observation; $\mathrm{flag}_1\gets\textsc{true}$
\EndIf
\State compute $r^{\ell},r^{e},r^{e,\mathrm{cum}}$ by \eqref{eq:residual} and \eqref{eq:ecum}
\State $\mathrm{flag}_2\gets$ any residual exceeds its threshold
\State $\mathrm{flag}_{3}\gets \hat\alpha_n(t{-}1)\notin[0,\alpha^{\max}]$ for some $n$
\If{$\mathrm{flag}_1\wedge\neg\,\mathrm{flag}_2$}
  \State proceed with repaired observation \Comment{channel fault}
\ElsIf{$\mathrm{flag}_2\vee\mathrm{flag}_{3}$}
  \State escalate to $L_5$ \Comment{authenticated, inconsistent}
\EndIf
\State $\pi(t)\gets\mathrm{asm}(\sigma(t),\mathcal{M}(t),g)$; $\;\tilde x(t)\gets\mathcal{G}_\theta(\pi(t))$
\State $\mathrm{flag}_{4}\gets \Psi_k(t;W)>\beta_k \vee \Theta(t;\tilde x(t))>\theta_{\max}$
\State $x(t)\gets\mathrm{flag}_4\;?\;\Pi_{\mathcal{X}_{\mathrm{adm}}}(\tilde x(t))$ by \eqref{eq:repair}$\;:\;\tilde x(t)$
\If{$\frac{1}{W_g}\sum r(t')>\gamma$ by \eqref{eq:rejflag}}
  \State $x(t)\gets\Pi_{\mathrm{fb}}(\mathcal{M}(t))$ by \eqref{eq:fallback}
\EndIf
\State append $\varrho_t$ to the chain \eqref{eq:hashchain}; \textbf{return} $x(t)$
\end{algorithmic}
\end{algorithm}

\subsection{Threshold Calibration and Latency Accounting}\label{subsec:calibration}

Thresholds are calibrated on clean runs rather than tuned against the attack. For a target per-slot false-positive rate $\alpha$,
\begin{equation}\label{eq:calib}
\tau_{\mathrm{cmc}}=Q_{1-\alpha}\big(\mathcal{R}^{\mathrm{clean}}\big),
\end{equation}
where $Q_{1-\alpha}$ is the empirical quantile and $\mathcal{R}^{\mathrm{clean}}$ collects the residuals of~\eqref{eq:residual} from attack-free runs on seeds disjoint from evaluation, so the false-positive rate is a design target rather than a fitted quantity. Because the clean residual is pure sensing noise, the empirical quantile should converge to the analytical $z_N\sigma_\ell$, and it does: $1.787$~dB against $1.740$~dB. A $5\%$ margin gives the deployed $\tau_{\mathrm{cmc}}=1.877$~dB. On the evaluation seeds the per-branch rates are $0.0022$, $0.0089$ and $0.0011$ for the geometric, one-step energy and cumulative residuals, respectively, each at or below $\alpha=0.01$.

The one-step energy branch is evaluated only when the preceding record was accepted by both $L_1$ and $L_2$, and the cumulative branch only when every record in its window was. Otherwise a substituted anchor makes the residual measure service since the substitution, of order $\epsilon_{\mathrm{tx}}R=8.4$~J rather than $\sigma_e=0.05$~J, producing a second mode with $30$ of $31$ occurrences between $8.37$ and $8.57$~J. The guard does not reduce detection, since $L_2$ rejects every perturbed slot of $\mathcal{A}_2$ either way, and lowers the branch false-positive rate from $0.0344$ to $0.0011$ and the layer union from $0.0456$ to $0.0122$; replaying the record identity against the ground-truth state gives zero firings over $290$ slots.

The concentration bound is specified rather than calibrated, since calibration would merely encode the concentration the unconstrained agent produces. At $W=12$, $\Psi_2$ lives on the lattice $j/36$ with maxima $13/36$ clean and $14/36$ attacked, so no budget both preserves clean operation and constrains the attack; we set $\beta_2=15/36=0.4167$, which is why concentration never fires. $\Theta$ attains exactly $18$ over the calibration runs, so $\theta_{\max}=18$ is the tightest budget with no calibration-time activation; it is exceeded in $39.5\%$ of attacked slots and in $1.0\%$ of clean evaluation slots, a clean-condition constraint activation rather than a detector false positive. Alternative starvation budgets trade off attacked and clean repairs: $53.3\%$ against $5.8\%$ at $\theta_{\max}=14$, $46.5\%$ against $0.83\%$ at $16$, and $32.5\%$ against $0.00\%$ at $20$.

With the model co-located, latency has no transport term,
\begin{equation}\label{eq:latency}
\Lambda(t)=\underbrace{\Lambda_{\mathrm{pre}}\big(|\pi(t)|\big)
+\Lambda_{\mathrm{dec}}\big(|\tilde x(t)|\big)}_{\text{inference}}
\;+\;\sum_{i=1}^{5}\Delta_i(t),
\end{equation}
where $\Delta_i$ is the latency added by $L_i$. Each layer performs at most $O(UN+N\log N)$ arithmetic, so the defense overhead is expected to remain small relative to model inference; the measured latency is reported in Section~\ref{sec:case}.
%against about $10^{9}$ operations per inference step, so $\sum_i\Delta_i\ll\Lambda_{\mathrm{pre}}+\Lambda_{\mathrm{dec}}$ on any plausible substrate.

\section{Performance Evaluation Across the Adversary Ladder}\label{sec:case}

\subsection{Testbed, Substrate, and Protocol}\label{subsec:testbed}

Three UAVs serve twenty ground sensors over a $100\times100$~m$^2$ region at $100$~m altitude for thirty slots under the objective of~\eqref{eq:objective} (Table~\ref{tab:params}). The scenario, mission parameters and cost metric are inherited from the LAUS testbed, while the undefended attacked condition is re-run on the present substrate, a locally hosted open-weight model under greedy, grammar-constrained decoding, so absolute costs are not comparable with LAUS: the studies share the scenario, but not the scheduler. An ablation shows that the decision depends on the generated reasoning trace rather than only on the observation: over ten prompts, removing the reasoning trace changes the actuated assignment in every case, and truncation to one-quarter, one-half, and three-quarters changes it in $70\%$, $60\%$ and $20\%$ of cases.

Every condition is evaluated over $N_{\mathrm{run}}=30$ paired seeds, each with its exogenous stream drawn once before any action and then indexed rather than consumed. Seeds run in lockstep, with a slot's prompts dispatched as one batch of fixed width, and decoded assignments are memoized on a hash of the prompt, making $\pi\mapsto\tilde x$ single-valued across conditions; replaying a run from the store reproduced its cost and decision trace exactly. Thresholds are fixed beforehand on disjoint calibration seeds. Each layer is evaluated only against classes that reach it: for $\mathcal{A}_1$, no defense, $L_1$, $L_1$ with the fallback, the validator, and the validator with provenance; for $\mathcal{A}_2$, $L_1$, $L_1$ with $L_2$, that pair with the fallback, and the full stack; for $\mathcal{A}_3$, both configurations reported in Table~\ref{tab:results}; and for the two variants of $\mathcal{A}_4$ and $\mathcal{A}_{\mathrm{mem}}$, the full stack. Clean undefended and defended conditions bracket the grid. Significance uses the Wilcoxon signed-rank test over paired seeds, while the Friedman test assesses differences across all conditions, with percentile bootstrap intervals over $10^4$ resamples, since the attacked cost distribution is asymmetric. The Friedman test over all fifteen conditions gives $\chi^2=353.5$ on $14$ degrees of freedom ($p=7.6\times10^{-67}$). The value $p=1.7\times10^{-6}$ is the two-sided floor of the normal approximation at $n=30$, reached when all thirty paired differences share a sign.
\begin{table}[!tb]
\centering
\caption{Key testbed configuration parameters.}
\label{tab:params}
\footnotesize
\setlength{\tabcolsep}{3pt}
\begin{tabular}{@{}p{0.47\linewidth}p{0.47\linewidth}@{}}
\hline
Parameter & Value \\
\hline
UAVs $U$ / sensors $N$ / horizon $T$ & 3 / 20 / 30 \\
Region $D\times D$ / altitude $H$ & $100\times100$~m$^2$, 100~m \\
Buffer $Q_{\max}$ / energy budget $E_0$ & 60 packets / 50~J \\
Service volume $R$ per visit & 24 packets \\
Arrival mean / cap $\alpha^{\max}$ & 2.6 / 4 packets per slot \\
Attack intensity $\kappa$ & 3 \\
\hline
Model / revision & Qwen3-8B-AWQ / \texttt{4da05a8e} \\
Quantization / decoding & 4-bit AWQ / greedy, grammar-constrained \\
Slot $\Delta t$ / speed limit $v_{\max}$ & 5.0~s / 14.0~m/s \\
Memory window $W_m$ & 5 records \\
Sensing noise $\sigma_\ell$ / $\sigma_e$ & 0.5~dB / 0.05~J \\
Starvation budget $\theta_{\max}$ & 18 slots \\
Kill switch $(\gamma,W_g,W_h)$ & $(0.6,5,3)$ \\
Calibration target $\alpha$ & 0.01 \\
Evaluation / calibration seeds & 0--29 / 100--119 \\
Hardware / serving & RTX 4070, i9-14900K; vLLM 0.27.1 \\
\hline
\end{tabular}
\end{table}

\subsection{Results}\label{subsec:results}

Table~\ref{tab:results} reports the results. Its two central metrics move independently: $\Gamma$ says how much the mission suffered, the success rate whether the adversary was still choosing the schedule.

\begin{table*}[!tb]
\centering
\caption{Per-layer evaluation across the adversary ladder, $N_{\mathrm{run}}=30$. The ratio $\Gamma$ is normalized to the clean undefended row, ASR is~\eqref{eq:asr}, and $p$ is the two-sided Wilcoxon signed-rank value against the undefended attacked row. The 95\% CIs are percentile-bootstrap intervals, and $\sigma$ is the sample standard deviation.}
\label{tab:results}
\footnotesize
\begin{tabular}{@{}llccccccc@{}}
\hline
Condition & Active layers & Mean $J$ & $\sigma$ & 95\% CI & $\Gamma$ & ASR & Fallback (\%) & $p$ \\
\hline
Clean, undefended & --- & 39.93 & 5.86 & [37.93, 42.17] & 1.00 & --- & 0.0 & --- \\
$\mathcal{A}_1$, undefended & --- & 124.97 & 19.69 & [117.80, 131.93]& 3.13 & 0.066 & 0.0 & --- \\
\hline
$\mathcal{A}_1$ & $L_1$ & 223.97 & 67.24 & [200.63, 248.23]& 5.61 & 0.001 & 0.0 & $5.2{\times}10^{-6}$ \\
$\mathcal{A}_1$ & $L_1,L_5$ & 89.37 & 32.78 & [78.23, 101.30] & 2.24 & 0.002 & 93.3 & $1.9{\times}10^{-4}$ \\
$\mathcal{A}_1$ & $L_4$ & 93.10 & 26.68 & [83.80, 102.73] & 2.33 & 0.069 & 0.0 & $1.5{\times}10^{-5}$ \\
$\mathcal{A}_1$ & $L_1,L_4$ & 165.50 & 53.11 & [146.83, 184.10]& 4.14 & 0.000 & 0.0 & $1.0{\times}10^{-3}$ \\
$\mathcal{A}_2$ & $L_1$ & 126.37 & 24.28 & [117.70, 134.77]& 3.16 & 0.070 & 0.0 & $0.81$ \\
$\mathcal{A}_2$ & $L_1,L_2$ & 220.07 & 70.60 & [195.27, 245.57]& 5.51 & 0.002 & 0.0 & $7.0{\times}10^{-6}$ \\
$\mathcal{A}_2$ & $L_1,L_2,L_5$ & 89.63 & 32.53 & [78.60, 101.50] & 2.24 & 0.002 & 93.6 & $2.0{\times}10^{-4}$ \\
$\mathcal{A}_2$ & $L_1$ to $L_5$ & 89.63 & 32.53 & [78.60, 101.50] & 2.24 & 0.002 & 93.6 & $2.0{\times}10^{-4}$ \\
$\mathcal{A}_3$ & $L_1,L_2$ & 38.90 & 7.01 & [36.63, 41.53] & 0.97 & 0.001 & 0.0 & $1.7{\times}10^{-6}$ \\
$\mathcal{A}_3$ & $L_1$ to $L_5$ & 40.87 & 11.82 & [37.50, 45.67] & 1.02 & 0.001 & 1.8 & $1.7{\times}10^{-6}$ \\
$\mathcal{A}_4$ surrogate & $L_1$ to $L_5$ & 40.70 & 6.00 & [38.63, 42.87] & 1.02 & 0.000 & 0.8 & $1.7{\times}10^{-6}$ \\
$\mathcal{A}_4$ policy-exact & $L_1$ to $L_5$ & 43.23 & 15.95 & [38.33, 49.40] & 1.08 & 0.001 & 0.7 & $1.7{\times}10^{-6}$ \\
$\mathcal{A}_{\mathrm{mem}}$ & $L_1$ to $L_5$ & 38.00 & 5.75 & [36.03, 40.10] & 0.95 & 0.016 & 0.3 & $1.7{\times}10^{-6}$ \\
\hline
Clean, defended & $L_1$ to $L_5$ & 41.00 & 7.30 & [38.60, 43.73] & 1.03 & --- & 0.3 & $1.7{\times}10^{-6}$ \\
\hline
\end{tabular}
\end{table*}

\begin{figure*}[!tb]
\centering
\includegraphics[width=0.88\textwidth]{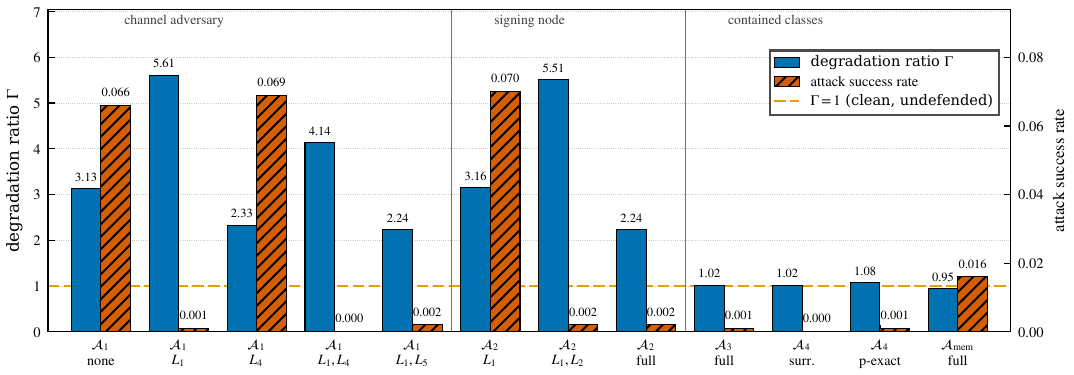}
\caption{Detection and response, from Table~\ref{tab:results} over $30$ paired seeds, grouped by adversary class. Within each group, adding a detector drives the attack success rate to the floor while the degradation ratio rises above the undefended baseline. The validator $L_4$, which detects nothing, is the only single layer that lowers $\Gamma$ against $\mathcal{A}_1$. The two $\mathcal{A}_4$ bars are the surrogate and policy-exact adversaries, differing by $0.06$ in $\Gamma$ after $6{,}960$ queries to the deployed model.}
\label{fig:detectvsrespond}
\end{figure*}

Fig.~\ref{fig:detectvsrespond} places the two metrics side by side, and the $L_4$-only condition shows the distinction most clearly: the validator detects no attack, yet removes $37.5\%$ of the excess cost $\mathbb{E}[J]-\mathbb{E}[J^{\mathrm{cl}}]$, from $85.04$ to $53.17$ packets, so $\Gamma$ falls from $3.13$ to $2.33$. Over the $900$ evaluation slots, concentration is never violated, while starvation is violated in $59$, so every repair comes from starvation. Adding provenance drives the success rate to zero but raises $\Gamma$ from $2.33$ to $4.14$, significantly above the undefended $3.13$ ($p=1.0\times10^{-3}$), because the stale snapshot costs more than the starvation constraint recovers; relative to $L_1$ alone ($\Gamma=5.61$), the validator still removes $31.8\%$ of the excess. The rows at $0.97$ and $0.95$ do not mean the attack improved the mission: $\Gamma$ is normalized to the clean undefended condition, and relative to the clean defended cost of $41.00$ both remain close to unity.

\subsubsection{Rung one, the channel adversary}
The undefended condition re-establishes the PMA vulnerability on this substrate, and provenance terminates $\mathcal{A}_1$, since without $k_{M_3}$ the adversary cannot recompute a tag and the counter binding rejects a replay. The detection rate is unity on every perturbed slot, so what matters is $\Gamma$ with and without $L_5$. Provenance drives the success rate from $0.066$ to $0.001$ while $\Gamma$ rises from $3.13$ to $5.61$ ($p=5.2\times10^{-6}$): the adversary no longer chooses the schedule, yet mission cost is higher than with no defense. The fallback returns $\Gamma$ to $2.24$ at the price of $93.3\%$ of slots under fallback control. The pattern recurs at rung two, where cross-modal consistency drives the success rate from $0.070$ to $0.002$ while $\Gamma$ rises from $3.16$ to $5.51$ ($p=7.0\times10^{-6}$).

\subsubsection{Rung two, the compromised perception node}
Because $\mathcal{A}_2$ signs its own observation, only the admissibility component of $L_1$ can act, and these rows test the asymmetry of Proposition~\ref{prop:kappa}: four of the six scaled quantities are held to a realized intensity of about $1.02$, while the other two remain unconstrained. At $\kappa=3$ the promoted backlog is scaled by $3.00$ and the promoted energy by $0.33$, both unclipped, while the suppressed energy is bounded at $1.00$ and the promoted path loss at $0.97$. Damage propagates through the unconstrained directions, so $L_1$ alone is statistically indistinguishable from no defense ($\Gamma=3.16$ against $3.13$, $p=0.81$). With reported path loss pinned at the band floor, the residual $\ell_n-\ell^{\min}$ persists, at every admissible intensity, which is why the true-positive rate stays at unity. Once $L_2$ is active the terminal layers are redundant for this class: the full stack and $L_1$, $L_2$, $L_5$ give the same cost on every seed.

\subsubsection{Rung three, the redundancy-aware adversary}
Confined to backlog, $\mathcal{A}_3$ passes $L_2$, and the manipulation the implied-arrival and output-side constraints still permit is not operationally significant. The $L_3$ adversary-caused rejection rate is $0.033$ at a realized promoted intensity of only $1.0055$, because the unserved branch $\alpha^{\max}/\alpha_n$ of~\eqref{eq:kappaq} admits no manipulation in the $26\%$ of slots where arrivals reach the cap. Consistent with Corollary~\ref{cor:noadmissible}, the adversary gains stealth only by reducing the intensity until it produces no measurable damage, so this row confirms the bounds rather than testing them independently (Section~\ref{subsec:threats}).

\subsubsection{Rung four, the policy-aware adaptive adversary}
Since $\mathcal{A}_4$ solves~\eqref{eq:adaptive}, the figure of merit is the residual $\Gamma$ against the undefended value. It is limited by the admissible observation set rather than by~\eqref{eq:fair}, which acts on the actuated pattern, since visits to lightly loaded sensors can let heavily loaded queues grow while the pattern stays dispersed. Propositions~\ref{prop:kappa} and~\ref{prop:l2detect} confine $\mathcal{A}_4$ to $\kappa\in[1,1.02]$, about $1.4$~dB of path-loss distortion, below the $1.74$~dB at which $L_2$ reacts.

Because the model is open-weight, the policy-exact adversary runs a local copy of $\mathcal{G}_\theta$, decoding $K=8$ candidate observations per slot over the open window and committing the one whose induced assignment has the highest expected one-slot cost, requiring $6{,}960$ decodes and $8.4$ GPU hours. The difference in mean cost is not statistically significant ($p=0.99$ against the surrogate); the effect is in the tail. On $14$ of $30$ seeds it does strictly less damage, because greedy per-slot optimization inside so narrow a window can raise cost later, while two seeds add $+47$ and $+57$ packets, raising $\sigma$ from $6.00$ to $15.95$ and $\Gamma$ only to $1.08$. On both outliers the validator performs no repair and $\Theta$ stays within budget: this is residual capability that remains admissible, rather than capability excluded by the bounds. The governance rule is itself exploitable, since $L_3$ rejects fewer than one perturbed slot in ten while the switch triggers only above $\gamma=0.6$ over $W_g=5$ slots, so fallback occupancy stays at $0.7\%$.

\subsubsection{A different injection point and attack onset}
The memory-poisoning variant leaves the live observation clean and consistent with both predictors, so no observation-side layer fires and the case is handled on the decision side. The success rate is $0.016$, the highest among the full-stack defended adversarial conditions and roughly an order of magnitude above the observation-side classes, yet $\Gamma=0.95$ and fallback occupancy is $0.3\%$, as in the clean defended condition: the adversary acquires a small share of schedule control without increasing mission cost. The present runs do not distinguish whether the record shifts the choice only among assignments already satisfying~\eqref{eq:fair} or the starvation constraint repairs the difference; they establish that the decision-side constraints bound damage from an injection point they were not calibrated against, and that no layer identifies it. For attack onset (Appendix~\ref{app:onset}), $\mathcal{A}_3$ gives the same outcome at both onsets, while for $\mathcal{A}_2$ they differ in duration rather than detection, with fallback occupancy falling from $93.6\%$ to $57.1\%$ and $\Gamma$ from $2.24$ to $1.48$.

\subsection{Detection Curves and Per-Layer Rates}\label{subsec:curves}

Proposition~\ref{prop:l2detect} predicts the transition, since a promoted sensor's residual $r^{\ell}_n=\ell_n(1-\kappa^{-1})$ crosses $\tau_{\mathrm{cmc}}$ at $\kappa^{\mathrm{det}}_\ell$, and Fig.~\ref{fig:detection} tests it. For the onset condition, detection reaches unity at the predicted boundary and rises gradually below it, from $0.100$ at $\kappa=1.01$ to $0.767$ at $\kappa=1.02$, because~\eqref{eq:kappadetect} is the worst case for a sensor at the band floor and sensors with larger path loss, and noise, cross earlier; the lower run-average rate reflects slots already operating from a substituted observation. The $L_1$ band rejects $\mathcal{A}_2$ at none of the tested intensities, including $\kappa=3$, as Proposition~\ref{prop:kappa} implies: knowing $L_1$, a rational adversary clips to the band, and once $\kappa>\bar\kappa$ the reported path loss is pinned at the floor and the residual saturates at $\ell_n-\ell^{\min}$. The $L_3$ sweep (Appendix~\ref{app:l3}) shows an onset rejection rate rising from $0.60$ at $\kappa=1.05$ to unity by $\kappa=1.15$, bracketing the transient threshold of $1.06$, and a steady-state rate of at least $0.586$.

\begin{figure}[!tb]
\centering
\includegraphics[width=0.9\columnwidth]{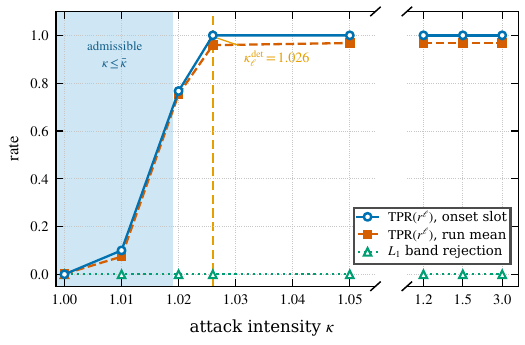}
\caption{Detection of $\mathcal{A}_2$ by $L_2$ against the boundary $\kappa^{\mathrm{det}}_\ell$ that Proposition~\ref{prop:l2detect} fixes before any run. The shaded region is the interval Proposition~\ref{prop:kappa} leaves admissible on the realized geometry. The axis is broken because all rates saturate beyond $\kappa=1.05$.}
\label{fig:detection}
\end{figure}

Provenance rejects every perturbed slot of $\mathcal{A}_1$ at a false-positive rate of $0.001$, the $L_1$ band and the concentration constraint never fire, and $L_4$ starvation fires on $0.068$ of perturbed slots at a clean rate of $0.010$. For the two detectors the aggregate rate conflates a detection, where the current report violates the admissibility condition, with a cascade, where the reference state is itself a substituted observation. Although $L_2$ rejects every perturbed slot of $\mathcal{A}_2$, only about one in fifteen rejections is attributable to the current manipulation, because the last accepted report never refreshes under a persistent adversary. This is how a low adversary-caused rejection rate coexists with a success rate of $0.002$, while $\Gamma=5.51$ quantifies the cost of the resulting freeze; the adversary-caused rate of $L_3$ is likewise only $0.033$ (Table~\ref{tab:onset}).

\subsection{Cost of the Architecture and Sensitivity}\label{subsec:cost}

The defense tax is $\Delta^{\mathrm{tax}}=1.07$ packets, $2.7\%$ of the clean cost, of which $0.64$ packets are attributable to $L_4$ ($L_4$ alone raises the clean cost from $39.93$ to $40.57$), so the layer that removes $37.5\%$ of the attack-induced excess cost accounts for most of the tax, and the stack costs about $1.25\%$ of the $85.04$ packets the attack adds. Table~\ref{tab:cost} decomposes~\eqref{eq:latency}: the stack adds $0.1255$~ms, $0.0025\%$ of the $5$~s period, with $L_2$ contributing the largest latency because it recomputes~\eqref{eq:pathloss} for every UAV and sensor pair. From $N=20$ to $N=1000$ (Appendix~\ref{app:scaling}) the stack grows by $64.2\times$ compared with the $115.3\times$ growth of $N\log N$ and still consumes only $0.06\%$ of the period. The audit chain of~\eqref{eq:hashchain} matched the stored digests in all $180$ runs deploying $L_5$.

\begin{table}[!tb]
\centering
\caption{Per-layer added latency from~\eqref{eq:latency}, averaged over every five-layer condition of Table~\ref{tab:results}. Percentages are of the $5$~s control period.}
\label{tab:cost}
\footnotesize
\begin{tabular}{@{}lccc@{}}
\hline
Component & Mean (ms) & \% of $\Delta t$ & Complexity \\
\hline
$L_1$ integrity gate & 0.0315 & 0.00063 & $O(N+|\sigma|)$ \\
$L_2$ cross-modal & 0.0590 & 0.00118 & $O(UN)$ \\
$L_3$ anomaly & 0.0051 & 0.00010 & $O(N)$ \\
$L_4$ validator & 0.0287 & 0.00057 & $O(N\log N)$ \\
$L_5$ governance & 0.0013 & 0.00003 & $O(1)$ \\
\hline
Stack total & 0.1255 & 0.00251 & $O(UN + N\log N)$ \\
Control period $\Delta t$ & 5000 & 100 & --- \\
\hline
\end{tabular}
\end{table}

A key parameter is the knee of the link-success mapping, which determines whether the system is queue limited or link limited. At the nominal $0.62$, $58.5\%$ of assignments succeed, and sweeping the knee (Appendix~\ref{app:scaling}) raises $\Gamma$ monotonically with link quality from $6.4$ to $52.5$, showing that the reported conclusions persist over a wide range of link conditions.

\subsection{Limitations}\label{subsec:threats}

First, \eqref{eq:pma} is multiplicative in the reported representation, which for path loss is expressed in decibels; the cross-modal boundary is unchanged at $\tau_{\mathrm{cmc}}=1.877$~dB in either parameterization (Appendix~\ref{app:mismatch}), but $\bar\kappa$ is a ratio of logarithmic quantities and is a budget on the reported field rather than a physical constraint. Second, only stale-hold repair is tested, so the finding that detection without response is worse than no defense is established for that rule, not for detection in general. Since the fallback of~\eqref{eq:fallback} discards the observation at $\Gamma=2.24$, any repair at least as good as discarding dominates stale-hold ($\Gamma=5.61$) by a factor of $2.50$; imputation from $\ell^{\mathrm{geo}}_n$ is the obvious candidate, since a frozen path loss is wrong by up to $3.13$~dB against $\sigma_\ell=0.5$~dB, and remains untested. Third, $\mathcal{A}_3$ clips its intensity to the open window, so its null result confirms the bounds rather than testing them; the damage curve against $\kappa$ is not reported, and the fallback is never run on clean observations over a full horizon, so it is not compared against a deterministic heuristic. Fourth, the realized $\bar\kappa=1.019$ reflects a compact geometry: at $D=2$~km, \eqref{eq:rangebounds} gives $\ell^{\max}=121.48$~dB and a worst-case ratio of $1.654$, and~\eqref{eq:kappamis} prices predictor mismatch. Fifth, reproducibility is enforced at the harness rather than the decoder: one prompt decoded in two batches with different neighbors agreed for $1573$ tokens before diverging, so a reproduction from an empty memoization store is reproducible in distribution rather than bit for bit. Finally, the onset transient of~\eqref{eq:onsettransient} cannot see a manipulation present from the first slot, Corollary~\ref{cor:noadmissible} needs an arrival cap both attained and known, and the reasoning budget is reached in every generation, so a different budget is a different policy.

\section{Conclusion}\label{sec:conclusion}

This work implemented and evaluated a five-layer DID architecture to secure the perception–reasoning interface of an LLM-centric UAV scheduler. Three closed-form bounds were derived to characterize admissible manipulation intensity, with the cross-modal boundary $\kappa^{\mathrm{det}}_\ell = 1.026$ matching the measured transition to unit detection. The evaluation shows that detection alone does not guarantee effective protection: provenance and cross-modal checks drive the attack success rate to near zero but can increase degradation when rejected observations are replaced by stale snapshots. The memory-driven fallback and starvation constraint provide recovery, while the full stack defense limits the degradation ratio to $1.02$ against the redundancy-aware adversary and $1.08$ against the policy-aware adversary. These results demonstrate the importance of evaluating detection and response separately when designing layered defenses for LLM-assisted UAV control.

\appendices

\section{Proofs of the Three Propositions and the Corollary}\label{app:proofs}
\emph{Proposition~\ref{prop:kappa}.} Substituting~\eqref{eq:pma} into the band constraint per feature and direction gives, for $n\in\mathcal{P}$, the conditions $\kappa\hat q_n\le Q_{\max}$ and $\hat\ell_n/\kappa\ge\ell^{\min}$, since $\delta_n s_q=+1$ scales backlog upward while $\delta_n s_\ell=-1$ scales path loss downward. For $n\in\mathcal{S}$ it gives $\kappa\hat e_n\le E_0$ and $\kappa\hat\ell_n\le\ell^{\max}$, since $\delta_n s_e$ and $\delta_n s_\ell$ both equal $+1$ there. The two remaining scaled features are driven downward and satisfy $\hat\phi_n/\kappa\ge0$ for every $\kappa\ge1$, so they impose no constraint. Taking the tightest condition yields~\eqref{eq:kappabar}. \hfill$\blacksquare$

\emph{Proposition~\ref{prop:l2detect}.} For $n\in\mathcal{P}$ the exponent $\delta_n s_\ell$ is $-1$, so $\hat\ell_n=\ell_n/\kappa$, and since~\eqref{eq:lgeo} recovers $\ell_n$ up to the noise of~\eqref{eq:noise} the residual is $r^{\ell}_n=\ell_n(1-\kappa^{-1})$. Requiring $r^{\ell}_n>\tau_{\mathrm{cmc}}$ gives $\kappa>\ell_n/(\ell_n-\tau_{\mathrm{cmc}})$, which decreases in $\ell_n$ and is worst at $\ell^{\min}$. For $n\in\mathcal{S}$ the exponent is $+1$, so $\hat\ell_n=\kappa\ell_n$, the residual is $\ell_n(\kappa-1)$, and the same requirement gives $\kappa>1+\tau_{\mathrm{cmc}}/\ell_n$, which also decreases in $\ell_n$ and is worst at $\ell^{\min}$. The larger of the two worst cases yields~\eqref{eq:kappadetect}; since $\ell^{\min}/(\ell^{\min}-\tau_{\mathrm{cmc}})=1+\tau_{\mathrm{cmc}}/(\ell^{\min}-\tau_{\mathrm{cmc}})>1+\tau_{\mathrm{cmc}}/\ell^{\min}$, the promoted-set term always binds, consistent with~\eqref{eq:kappamis}. \hfill$\blacksquare$

\emph{Proposition~\ref{prop:kappaq}.} Away from both clamps, \eqref{eq:queue} reads $q_n(t)=q_n(t{-}1)-\mu_n(t{-}1)+\alpha_n(t{-}1)$. Multiplying by $\kappa$ and substituting into~\eqref{eq:impliedarrival} gives $\hat\alpha_n(t{-}1)=\kappa q_n(t)-(\kappa q_n(t{-}1)-\mu_n(t{-}1))=\kappa\alpha_n(t{-}1)-(\kappa-1)\mu_n(t{-}1)$, which is~\eqref{eq:alphahat}. For $\mu_n(t{-}1)>0$ this decreases strictly in $\kappa$ and becomes negative, hence inadmissible, exactly when $\kappa(\mu_n(t{-}1)-\alpha_n(t{-}1))>\mu_n(t{-}1)$. For $\mu_n(t{-}1)=0$ it reduces to $\kappa\alpha_n(t{-}1)$, which exceeds $\alpha^{\max}$ when $\kappa>\alpha^{\max}/\alpha_n(t{-}1)$. The binding case gives~\eqref{eq:kappaq}. \hfill$\blacksquare$

\emph{Corollary~\ref{cor:noadmissible}.} If $\mu_n(t{-}1)=0$ then~\eqref{eq:alphahat} reduces to $\hat\alpha_n=\kappa\alpha_n$, which for $\alpha_n=\alpha^{\max}$ exceeds $\alpha^{\max}$ at any $\kappa>1$, and clipping makes that draw occur with probability $\pi_{\max}$. If $\mu_n(t{-}1)=R>0$, the served branch of~\eqref{eq:kappaq} rejects whenever $\kappa>R/(R-\alpha^{\max})$. The cases are exhaustive, so the per-slot rejection probability is bounded below as stated and the geometric bound follows. The argument uses $\pi_{\max}>0$ and an exactly known $\alpha^{\max}$; both are premises, not consequences. \hfill$\blacksquare$

\section{Parameterization and Predictor Mismatch in \texorpdfstring{$L_2$}{L2}}\label{app:mismatch}
Because~\eqref{eq:pma} scales a logarithmic quantity, it might be read as producing the tightness of Proposition~\ref{prop:l2detect} in Section~\ref{subsec:l2} by itself. It does not. Scaling the decibel value by $\kappa$ raises the linear path loss to the power $\kappa$, whereas an additive offset $\Delta$ in decibels, the physically natural form, scales it by $10^{\Delta/10}$. For such an offset the residual of~\eqref{eq:residual} is $|\Delta|$ up to sensing noise, so $L_2$ rejects every $|\Delta|>\tau_{\mathrm{cmc}}=1.877$~dB, at most a factor of $1.54$ on the linear scale. The multiplicative boundary expresses the same threshold, since $\ell^{\min}(1-1/\kappa^{\mathrm{det}}_\ell)=\tau_{\mathrm{cmc}}$ exactly by~\eqref{eq:kappadetect}, so the cross-modal boundary is a property of the sensing-noise floor. The admissibility bound $\bar\kappa$ of Proposition~\ref{prop:kappa} is different: a ratio such as $\ell^{\max}/\ell^{\min}$ changes with the reference level of the decibel scale, so $\bar\kappa$ has no physical meaning on its own and applies only to the multiplicative attack model of~\eqref{eq:pma}. An additive adversary is bounded instead by the band slack in decibels and by $\tau_{\mathrm{cmc}}$, whichever is tighter.

The boundary is also tightest here, because $\ell^{\mathrm{geo}}_n$ reproduces the true path loss up to sensing noise. Under a predictor with model mismatch of standard deviation $\sigma_{\mathrm{mis}}$, independent of the sensing noise, the residual scale becomes $\sigma_{\mathrm{eff}}=\sqrt{\sigma_\ell^{2}+\sigma_{\mathrm{mis}}^{2}}$ and Proposition~\ref{prop:l2detect} applies unchanged with $\tau_{\mathrm{cmc}}=z_N\sigma_{\mathrm{eff}}$, so
\begin{equation}\label{eq:kappamis}
\kappa^{\mathrm{det}}_\ell(\sigma_{\mathrm{mis}})=\frac{\ell^{\min}}{\ell^{\min}-z_N\sqrt{\sigma_\ell^{2}+\sigma_{\mathrm{mis}}^{2}}},
\end{equation}
which gives $\kappa^{\mathrm{det}}_\ell=1.024$, $1.056$, $1.108$ and $1.236$ at $\sigma_{\mathrm{mis}}=0$, $1$, $2$ and $4$~dB, an admissible offset of $1.74$, $3.89$, $7.17$ and $14.02$~dB. This does not by itself widen the adversary's window in this deployment, because the constraints compose as $\min\{\bar\kappa,\kappa^{\mathrm{det}}_\ell\}$ and the band is tighter: a realized $\bar\kappa=1.019$ admits $\ell^{\min}(\bar\kappa-1)=1.40$~dB of downward distortion, below the $1.74$~dB at which $L_2$ reacts with an exact predictor, and $L_2$ could bind only for $\sigma_{\mathrm{eff}}<0.40$~dB, which $\sigma_\ell$ alone exceeds. The ordering reverses once the band is loose: at $D=2$~km the band admits $48.0$~dB of slack, so $L_2$ binds for every $\sigma_{\mathrm{mis}}<13.8$~dB and the window is $z_N\sigma_{\mathrm{eff}}$ directly.

\section{Sensitivity to Attack Onset}\label{app:onset}
The implied-arrival test of~\eqref{eq:impliedarrival} and the inter-slot component of $L_1$ compare consecutive reports, so an attack beginning mid-mission presents a discontinuity that one present from the first slot does not. Table~\ref{tab:onset} reports both onsets for the two classes that reach the full stack. For $\mathcal{A}_3$ the rows are similar, which follows from Corollary~\ref{cor:noadmissible} rather than from an insensitive test: \eqref{eq:onsettransient} places the transient threshold at $\kappa>1.06$, whereas the realized intensity on the promoted set is $1.0055$. For $\mathcal{A}_2$ the rows differ in duration rather than detection: the cross-modal layer rejects every attacked slot at both onsets, so the observation is substituted before $L_3$ can act, and occupancy tracks the number of attacked slots, $29$ and $20$ out of $30$, once the dwell time needed to trigger the switch is included. The cost of defending against $\mathcal{A}_2$ is thus close to proportional to the duration of the compromise, the quantity the governance layer of Section~\ref{subsec:l5} is designed to bound.

\begin{table}[!tb]
\centering
\caption{Sensitivity to attack onset under the full stack, $30$ paired seeds per row, with rejection and detection measured on attacked slots only. The $L_3$ column reports the adversary-caused rate; the $L_2$ column reports the total rate, since it is the substitution rather than its cause that drives fallback occupancy.}
\label{tab:onset}
\footnotesize
\begin{tabular}{@{}llccccc@{}}
\hline
Adversary & Onset & $\Gamma$ & $L_2$ rej. & $\mathrm{TPR}(L_3)$ & ASR & Fallback \\
\hline
$\mathcal{A}_3$ & $t=1$ & 1.02 & 0.208 & 0.033 & 0.001 & \phantom{0}1.8\% \\
$\mathcal{A}_3$ & $t=10$ & 0.99 & 0.207 & 0.045 & 0.003 & \phantom{0}2.2\% \\
$\mathcal{A}_2$ & $t=1$ & 2.24 & 1.000 & 0.033 & 0.002 & 93.6\% \\
$\mathcal{A}_2$ & $t=10$ & 1.48 & 1.000 & 0.043 & 0.000 & 57.1\% \\
\hline
\end{tabular}
\end{table}

\section{Detection of \texorpdfstring{$\mathcal{A}_3$}{A3} Against Attack Intensity}\label{app:l3}
Table~\ref{tab:l3sweep} separates the two mechanisms of Corollary~\ref{cor:noadmissible}. The onset rejection rate rises from $0.60$ at $\kappa=1.05$ to unity by $\kappa=1.15$, bracketing the transient threshold of $1.06$ from~\eqref{eq:onsettransient}. The steady-state per-slot rate is at least $0.586$ at every tested intensity, corresponding under the geometric approximation to an expected time to first rejection below two slots, and no run survived the horizon. The runs cannot characterize the tail, while the geometric model extrapolates it under a per-slot independence that the persistent backlog state does not guarantee, so the rate is reported with an interval: at $\kappa=1.05$, $[0.546,0.626]$ implies an expected time to rejection in $[1.60,1.83]$ slots.

\begin{table}[!tb]
\centering
\caption{Detection of $\mathcal{A}_3$ by $L_3$, over twenty calibration seeds with $580$ steady-state comparisons per row and Wilson intervals. Here $\kappa=1$ carries no perturbation and reports the false-positive floor rather than a miss.}
\label{tab:l3sweep}
\footnotesize
\setlength{\tabcolsep}{3pt}
\begin{tabular}{@{}lccccc@{}}
\hline
$\kappa$ & Onset TPR & Steady & 95\% CI & $\mathbb{E}[$slots$]$ & $\Pr[\text{surv.}]^{\dagger}$ \\
\hline
1.00 & 0.000 & 0.000 & $[0.000,0.007]$ & --- & --- \\
1.05 & 0.600 & 0.586 & $[0.546,0.626]$ & 1.71 & $3\times10^{-12}$ \\
1.10 & 0.850 & 0.617 & $[0.577,0.656]$ & 1.62 & $3\times10^{-13}$ \\
1.15 & 1.000 & 0.648 & $[0.609,0.686]$ & 1.54 & $2\times10^{-14}$ \\
1.25 & 1.000 & 0.691 & $[0.653,0.728]$ & 1.45 & $5\times10^{-16}$ \\
1.50 & 1.000 & 0.893 & $[0.865,0.916]$ & 1.12 & $7\times10^{-30}$ \\
3.00 & 1.000 & 1.000 & $[0.993,1.000]$ & 1.00 & $0$ \\
\hline
\multicolumn{6}{@{}p{0.96\columnwidth}@{}}{\footnotesize
$^{\dagger}$Implied by the measured per-slot rate under a geometric model rather than observed, so an extrapolation assuming per-slot independence. No run survived the horizon at any $\kappa>1$.} \\
\hline
\end{tabular}
\end{table}

\section{Latency Scaling, and Sensitivity to the Link-Success Knee}\label{app:scaling}
Table~\ref{tab:scaling} checks the complexity claim of Section~\ref{subsec:composition} by driving each layer in isolation with synthetic observations of the stated size, $400$ repetitions per point after a warm-up pass, with $L_5$ omitted as $O(1)$; absolute figures are not comparable with the in-loop timings of Table~\ref{tab:cost}, and the sweep establishes the growth. Across a fiftyfold increase in $N$ the stack grows by $64.2\times$ against the $115.3\times$ that $N\log N$ predicts, because the sort executes in compiled code while per-layer overhead stays roughly constant. Beyond $N=200$, $L_2/(UN)$ stays near $0.06$~\textmu s, so the cross-modal recomputation of~\eqref{eq:pathloss} rather than the validator's sort governs asymptotic growth, while $L_1$ stays flat because the tag of~\eqref{eq:mac} is computed once per slot.

\begin{table}[!tb]
\centering
\caption{Per-layer latency as the field and the swarm grow, in milliseconds on the platform of Table~\ref{tab:params}. Percentages are of the $5$~s control period.}
\label{tab:scaling}
\footnotesize
\begin{tabular}{@{}rrccccrr@{}}
\hline
$N$ & $U$ & $L_1$ & $L_2$ & $L_3$ & $L_4$ & total & \% $\Delta t$ \\
\hline
\phantom{00}20 & \phantom{0}3 & 0.0053 & 0.0244 & 0.0046 & 0.0129 & 0.0472 & 0.00094 \\
\phantom{00}50 & \phantom{0}5 & 0.0039 & 0.0340 & 0.0054 & 0.0154 & 0.0587 & 0.00117 \\
\phantom{0}100 & 10 & 0.0038 & 0.0663 & 0.0075 & 0.0228 & 0.1003 & 0.00201 \\
\phantom{0}200 & 20 & 0.0039 & 0.2075 & 0.0111 & 0.0374 & 0.2599 & 0.00520 \\
\phantom{0}500 & 30 & 0.0041 & 0.9245 & 0.0221 & 0.0541 & 1.0048 & 0.02010 \\
1000 & 50 & 0.0048 & 2.9023 & 0.0405 & 0.0832 & 3.0308 & 0.06062 \\
\hline
\end{tabular}
\end{table}

Table~\ref{tab:knee} sweeps the link-success knee under a deterministic reference scheduler so that the channel parameter varies in isolation. The degradation ratio rises monotonically with link quality, from $6.4$ to $52.5$, because a healthy scheduler drives the clean cost toward zero while the attacked cost stays high, so the conclusion of Section~\ref{subsec:results} is robust across a factor of eight in $\Gamma$. The same sweep yields $\Gamma=13.8$ at the operating point, close to the published $14.2\times$ and far above the $3.13$ of the deployed agent, which indicates that the larger figure reflects a greedier scheduler rather than a stronger attack.

\begin{table}[!tb]
\centering
\caption{
%Sensitivity of the undefended attacked condition to the link-success knee, under a deterministic reference scheduler so the channel parameter varies in isolation. Absolute values are therefore not comparable with Table~\ref{tab:results}. TTF is the mean number of slots to a sensor's first service, and the last column is the fraction of slots in which the $L_1$ band rejects at $\kappa=3$.
Sensitivity to the link-success knee under the deterministic reference scheduler. TTF is the mean time to first service and $L_1$ rej. the rejection fraction at $\kappa=3$.
}
\label{tab:knee}
\footnotesize
\begin{tabular}{@{}lccccc@{}}
\hline
Link-success knee $\tau$ & $J^{\mathrm{cl}}$ & $J^{\mathcal{A}_1}$ & $\Gamma$ & TTF & $L_1$ rej. \\
\hline
0.36 & 79.5 & 511.7 & \phantom{0}6.4 & 9.1 & 0.18 \\
0.49 & 46.3 & 494.6 & 10.7 & 6.8 & 0.09 \\
0.62$^{\ast}$ & 34.4 & 473.5 & 13.8 & 5.3 & 0.04 \\
0.75 & 22.5 & 452.9 & 20.1 & 4.1 & 0.03 \\
0.91 & \phantom{0}8.2 & 432.6 & 52.5 & 3.3 & 0.02 \\
\hline
\multicolumn{6}{@{}p{0.9\columnwidth}@{}}{\footnotesize
$^{\ast}$Operating point used in the evaluation.} \\
\hline
\end{tabular}
\end{table}

\bibliographystyle{IEEEtran}
\bibliography{references}

\end{document}